\definecolor{Gray}{gray}{0.8}
\definecolor{Amit_color}{rgb}{0.2,0.2,0.8} 
\newcommand{\AG}[1]{{#1}}
\newcommand{\LL}[1]{{#1}}
\definecolor{Glenn_color}{rgb}{0,0.6,0} 
\newcommand{\GB}[1]{{#1}}
\newtheorem{theorem}{Theorem}
\newtheorem{problem}{Problem}
\DeclareMathOperator*{\argmax}{arg\,max}
\newcommand{\eat}[1]{}
\newcommand{\M}{\mathcal{M}}
\newcommand{\C}{\mathcal{C}}
\newcommand{\E}{\mathcal{E}}
\newcommand{\calP}{\mathcal{P}} 
\newcommand{\B}{\mathcal{B}}
\renewcommand{\S}{\mathcal{S}}
\newcommand{\spc} { \hspace{2pt}}
\newcommand{\followups}{\textsc{Followups}\xspace}
\newcommand{\probname}{\textsc{Proxi}\xspace}
\newcommand{\NextExplanation}{\textsc{NextExplanation}\xspace}
\newcommand{\MineExplanations}{\textsc{MineExplanations}\xspace}
\newcommand{\cov}{\sigma}
\newcommand{\NPhard}{NP-hard\xspace}
\newcommand{\random}{\textsc{Random}\xspace}
\newcommand{\greedy}{\textsc{Greedy}\xspace}
\newcommand{\mostpopular}{\textsc{Most-Popular}\xspace}
\newcommand{\exhaustive}{\textsc{Exhaustive}\xspace}
\newlength{\figwidth} 
\newlength{\figthree} 
\newlength{\figthreeone} 
\newlength{\figfour} 
\newlength{\figfourone} 
\begin{document}



\title{Validating Network Value of Influencers by means of Explanations} 


\author{
\begin{tabular}{cccc}
Glenn S. Bevilacqua$^\dag$ \hspace{2mm} & Shealen Clare$^\ddag$ \hspace{2mm} & Amit Goyal$^\dag$ \hspace{2mm} & Laks V. S. Lakshmanan$^\dag$ \\
\end{tabular}
\\$ $\\
\begin{tabular}{ccc}
University of British Columbia \\
Vancouver, B.C., Canada \\
{\sf $^\dag$\{lennson,goyal,laks\}@cs.ubc.ca, $^\ddag$shealen.clare@gmail.com} 
\end{tabular}
}

\eat{
\numberofauthors{4}

\author{
\alignauthor Glenn Bevilacqua\\
\affaddr{University of British Columbia}\\
\affaddr{Vancouver, BC, Canada}\\
\smallskip
\email{lennson@cs.ubc.ca}
\and
\alignauthor Shealen Clare\\
\affaddr{University of British Columbia}\\
\affaddr{Vancouver, BC, Canada}\\
\smallskip
\email{shealen.clare@gmail.com}
\linebreak
\and
\alignauthor Amit Goyal \\
\affaddr{University of British Columbia}\\
\affaddr{Vancouver, BC, Canada}\\
\smallskip
\email{goyal@cs.ubc.ca}
\and
\alignauthor Laks V. S. Lakshmanan \\
\affaddr{University of British Columbia}\\
\affaddr{Vancouver, BC, Canada}\\
\smallskip
\email{laks@cs.ubc.ca}
}
}

\maketitle \sloppy

\begin{abstract}
Recently, there has been significant interest in social influence
analysis. One of the central problems in this area is the problem of
identifying \emph{influencers}, such that by convincing these users to
perform a certain action (like buying a new product), a large number of
other users get influenced to follow the action. The client of such an
application is essentially a \emph{marketer} who would target these
influencers for marketing a given new product, say by providing
free samples or discounts. It is natural that before committing
resources for targeting an influencer the marketer would be interested
in validating the influence \AG{(or network value)} of influencers returned.
\GB{This requires} digging deeper into such analytical
questions as: who are their followers, on what actions (or products) they 
are influential, etc. However, the current approaches to identifying
influencers largely work as a black box in this respect. The goal of this paper is to
open up the black box, address these questions and provide informative and crisp explanations
\AG{for validating the network value}
of influencers. 

We formulate the problem of providing explanations (called \probname) as
a discrete optimization problem \GB{of feature selection}. We show that \probname is not only
\NPhard to solve exactly, it is \NPhard to approximate within any
reasonable factor. Nevertheless, we show interesting properties of the
objective function and develop an intuitive greedy heuristic. We perform
detailed experimental analysis on two real world datasets -- Twitter and Flixster, 
and show that our approach is useful in generating concise and insightful explanations 
of the influence distribution of users and that our greedy algorithm is
\GB{effective and efficient with respect to} several baselines. 
\end{abstract}

\eat{
The study of social influence analysis has attracted significant
attention ever since Kempe, Kleinberg and Tardos published their seminal
paper on the problem of influence maximization \cite{kempe03}. The
objective is to identify influential users (also called
\emph{influencers}) in a social network such that by convincing these
users to perform a certain action (like buying a new product), a large
number of other users get influenced to follow the action. The client
of such an application is essentially a \emph{marketer} who would target
these influencers for the marketing of the given new product (e.g., by
providing free samples or discounts). 
However, the current approaches to
identifying influencers, and in particular, for the problem of influence
maximization, largely work as black-box and just output the
list of influencers, with an estimate on expected influence spread. The goal of this
paper is to address these questions and provide informative and crisp explanations to
influence distribution of influencers. 
}

\section{Introduction}
\label{sec:intro}

\eat{	
Social influence occurs when one's actions are affected by one's social neighbors. While
the study of social influence has a long history in the fields of
sociology and marketing, it 
} 

The study of social influence has gained tremendous attention in the field of data
mining ever since the seminal paper of Kempe et al.~\cite{kempe03}
on the problem of 
influence maximization. A primary motivating application for influence
maximization as well as for other closely related
problems such as identifying community leaders~\cite{leaders},
trendsetters~\cite{trendsetters}, influential bloggers~\cite{agarwal08}
and microbloggers~\cite{twitterrank,cha10},
is viral marketing.
%
The objective in these problems is to identify influential users (also called
\emph{influencers} or \emph{seeds}) in a
social network 
such that by convincing these users to perform a certain action (like
buying a new product), a large number of other users can be influenced to
follow the action.  The client of such an application is essentially a
\emph{marketer} who would 
target these influencers for marketing a given new product, e.g.,
by providing free samples or discounts. 
It is natural that the marketer would like to analyze the influence
spread \cite{kempe03} or ``network value'' \cite{domingos01} of these influencers 
before actually committing resources
for targeting them. 
She would be interested in the answers to the
following questions: \begin{em}
Where exactly does the influence of an influencer
lie? How is it distributed?
On what type of actions (or products\footnote{buying a product is 
an action.}) is an influencer influential? What are the
demographics of its followers? 
\end{em}   
However, the \emph{current approaches for 
identifying influencers}, and in particular, for selecting seed users for the problem of influence
maximization, \emph{largely work as a black box} \GB{in this respect. Just outputting a}
list of seed users (influencers), along with a scalar which is an estimate of the expected 
influence spread. The goal of this
paper is to open up this black box, address these questions and provide informative and 
crisp explanations for the 
influence distribution of influencers. 


Providing explanations to the marketer for influence spread of influencers can have
several benefits. First, it provides \emph{transparency} to the seed
selection algorithm. The marketer is made aware of why a certain user
is selected as a seed, where the user's influence lies, and on what type
of actions the user is influential. These are important analytics she 
may want to investigate before spending
resources on targeting that seed user. Second, it makes the seed selection algorithm
(and thus the system) \emph{scrutable}. The marketer would now be able
to tell the
system, if the explanations (and thus the algorithm) are wrong or are based
on incorrect assumptions, say using her own surveys or background knowledge.
If the explanations are correct and accurate, \GB{this} can increase the
marketer's \emph{trust} in the system and help her in making
good, informed decisions quicker. In other words,
accurate explanations may enhance \emph{effectiveness} and
\emph{efficiency} of the marketers, in making important marketing
decisions. 
Furthermore, such explanations allow room for 
\emph{flexibility} in the targeting process. Indeed, if the
marketer is not able to target \AG{some of the} seeds successfully, then she
knows what exactly the impact on ``coverage'' would be, and can make adequate
adjustments. For example, if a seed's influence was over young college 
students in CA on handheld devices, the marketer can look for an \GB{alternate} 
seed with close characteristics to cover that demographic. 
Overall, providing accurate and crisp explanations would increase the marketer's 
\emph{satisfaction}, and hence loyalty to the provider of seed selection service, 
\GB{as well as} \emph{confidence} in the seed selection algorithm.
In this paper, we specifically focus on providing explanations for
\emph{influencer validation}, that is, a marketer would be able to
analyze the demographics of the followers and the actions.


\GB{On} the industry side, companies like Klout\footnote{\url{www.klout.com}} and 
Peerindex\footnote{\url{www.peerindex.com}} 
claim to provide users'
influence scores from their Social Media profiles (like Facebook and
Twitter). Both these companies provide ``topics'' over which the
users' influence is spread, in addition to influence scores. The 
explanations we propose offer a principled and comprehensive 
account of just how the influence of a seed user is distributed 
across \GB{the} user (i.e., follower) and action dimensions instead of an 
ad hoc ``influence score'' and ``topic''. 

The merit of providing explanations has been recognized before, in the
related fields of recommender systems (see Chapter 15 of \cite{handbook}
for a survey) and expert systems (see \cite{Lacave2004} for a review).
In these systems, the explanations are known to have benefits similar to
those mentioned above.  {\sl In the field of social influence analysis, to
the best of our knowledge, there has been no such systematic study}.



\begin{table}

	\centering
	\caption{Mike: User with most followups. $k=6$, $l=3$.}
	\label{tab:flixster_firstUser}
	\begin{scriptsize}
	\begin{tabu}{c c c | c c c }   	
		\multicolumn{3}{c|}{} 	&Actions&Followers&Followups\\	
		\multicolumn{3}{c|}{}   & (3.0k) & (106) & (37.9k)\\ \hline
		
		\multirow{2}{*}{rated R} & \multirow{2}{*}{thriller} & male	& 708 & 67 & 6.7k \\ \tabucline[1pt]{3-6}
		& & female & 708 & 37  & 2.3k\\ \tabucline[2pt]{1-6}
		
		\multirow{2}{*}{male}	& \multirow{2}{*}{drama} &len:long& 480 & 67 & 4.8k	\\ \tabucline[1pt]{3-6}
		& &len:med& 520 & 67	& 3.5k\\ \tabucline[2pt]{1-6}
		
		\multirow{2}{*}{pre-1997} & \multirow{2}{*}{comedy} &	male & 607 & 67 &  4.8k\\ \tabucline[1pt]{3-6}
		& & female  & 607 & 37 & 2.2k\\ \hline	
		\multicolumn{5}{r}{Total Coverage:} & 56.3\%\\
	\end{tabu}
	\end{scriptsize}
 
\end{table}

\smallskip
\noindent 
\textbf{A Motivating Example.} 
Consider a social network where users perform actions. For instance, in
Flixster\footnote{See \textsection\ref{sec:exp} for a description of the 
datasets.} (\url{www.flixster.com}), user actions correspond to rating
movies. A movie can have many attributes, e.g., genre, year of release,
length of the movie, rating value, and similarly, users can have many
attributes, e.g., gender, age-group and occupation.
Table~\ref{tab:flixster_firstUser} shows an example of the result from
our \GB{experimental} analysis on the real data of Flixster. In this
example, we analyze the influence of the top influencer in Flixster,
measured in terms of number of followups. For simplicity, we refer to
this user as Mike. Informally, a followup is defined as a follower
following up on a user's action. In the context of this example, the number
of followups of Mike is the number of times a follower of Mike rated a
movie, after he rated it.\footnote{We \GB{exactly define} followup in
\textsection\ref{sec:probdef}}

The table makes the following assertions. Mike has rated 3K movies and
has 106 active followers. In total, he has received 37.9K followups.
This is the overall picture of Mike's influence spread. Drilling down, the 
table shows six explanations describing a partial breakdown 
that covers a significant chunk of the influence mass, each explanation 
corresponding to a row. Each explanation 
is presented in terms of action and user features (first three columns). 
The explanations are heterogeneous: e.g., 
the first two explanations involve attributes maturity rating, genre, and user gender, 
whereas the next two involve the attributes user gender, genre, and movie length. 
\eat{The influence is
described by 6 explanations (one explanation corresponds to one row),
and each explanation contains three features.} 

With just 6 crisp explanations, our algorithm is able to explain 56.3\%
of Mike's followups. As an example, 6.7K followups came from male
followers, on Mike's ratings on thriller  movies rated R. (movies
restricted to persons of age 17 or older).  Moreover, 2.3K followups
came from females on the same category of movies, suggesting that Mike
is quite influential on R-rated thriller movies. The Actions column
tells us that there are  708 such movies (thriller, rated R) that are
rated by Mike, while the Followers column tells us that out of 106
followers, 67 are males and 37 are females and 2 (= 106 - (67 + 37))
others did  not specify their gender. Other explanations reveal that
Mike is also influential on drama movies on males, and old (pre-1997)
comedy movies regardless of gender.

\eat{
\smallskip
\noindent 
\textbf{A Motivating Example.} 
Consider a social network where users perform actions.  For instance, in 
Twitter (\url{www.twitter.com}) user actions correspond to posting \emph{tweets}. 
A tweet can have many attributes, e.g., topic, category,
has\_image etc, and similarly, users can have many attributes, e.g.,
location, gender, age-group and occupation. Table~\ref{tbl:example} shows
an example of the kinds of explanations we envisage, for a hypothetical 
seed user $u$. 
\note{Amit, this hypothetical example is a placeholder, with instances to be 
replaced by real stuff found by our experiments.} 
The table makes the following assertions. 
User $u$ posted 200 tweets overall, and received 8000 retweets
(or \followups) between them. User $u$ was able to influence 6000 other users 
overall, i.e., each of them retweeted at least one of $u$'s tweets (directly 
or transitively). 
This is the overall picture of $u$'s influence spread. Drilling down, the 
table shows two explanations describing a partial breakdown 
that covers a significant chunk of the influence mass. Each explanation 
is presented in terms of action topics and user demographics. E.g., 
the first explanation ($k = 1$) says $u$ tweeted about (topic) Obama 
100 times and 700 Vancouver based females retweeted one or more of her 
tweets at least once. And between them, $u$ racked up 4900 followups 
for his Obama tweets. 
The second explanation ($k = 2$) says user $u$ tweeted about topic Camera 30 
times and 4000 male students retweeted $u$'s tweets at least once. Between 
them, $u$'s camera tweets received 2100 followups in all. 
}


Notice, the entries in each of the numeric columns (Action/Followers Count
and Followups) do \emph{not} sum to their overall values. That is, in
general, the explanations provided do not necessarily completely cover
the entire influence spread of the seed user.  Each explanation consists
of a description involving follower (user) demographics and action
attributes (e.g., topic). It also comes equipped with three statistics
-- action count, follower count, and followups, with the meaning described 
above. It is possible that different explanations cover overlapping
demographics, e.g., females in Vancouver and young college students in
BC. 

There are several benefits to this style of explanations in terms of
affording simple inferences. First, we can deduce that if Mike (in the
example above) rates an arbitrary movie again, he is likely to
receive 37.9K/3K = 12.6 followups on average. Next,
we know on what kinds of movies Mike is influential. Thus, if a
marketer wants to advertise a horror movie, then perhaps Mike is not a
good seed, even though his influence is quite high. Moreover, if Mike
rates a thriller movie rated R, it is likely that it will attract
6.7K/708 = 9.5 followups, from male users on average. 
Clearly, these types
of explanations are very informative and valuable.

Notice that a trivial answer to providing explanations is to describe
every single followup for a given influencer. This is undesirable since
such an explanation would be verbose and uninformative to the marketer
trying to make sense of the influencer's influence. We thus argue for
providing crisp or concise explanations that explain as much of the
influence spread (in terms of followups) of the influencer, as possible.  


\smallskip
\noindent
In the literature, the \emph{network value} of a user \cite{domingos01, kempe03} is treated as 
a scalar, i.e., it's equated with the (expected) influence spread of the user. We
argue that in order to answer the above questions, we must revisit this
notion. Our thesis is that there is much more to the network value of a
user than just a number: it can be seen as a summary of the influence
distribution of the user, which describes how the influence is
distributed, over what kind of user demographics and on what type of
actions. In this paper, we formulate and attack the problem of how to
characterize the distribution of influence of a given seed user. In
particular, we make the following contributions. 


\begin{itemize}
	\item We propose a novel problem of PROviding eXplanations for
		Influencers' validation (\probname) to describe network value of
		a given influencer. We outline several benefits of providing
		explanations. 
	\item We show that \probname is not only \NPhard to solve exactly,
		it is \NPhard to approximate within any reasonable factor.
		However, by exploiting properties of the objective function, we
		develop an intuitive greedy heuristic.
	\item We perform experimental analysis on two real datasets --
		Flixster and Twitter, by exploring the influence distributions
		of top influencers, from both qualitative and quantitative
		angles. 
	\item Performing qualitative analysis, we establish the validity 
		of our framework, while gaining insights into influence
		spread of influencers. On the other hand, with quantitative
		analysis, we show that our algorithm explains a significant amount
		of the influence spread with a small number of crisp
		explanations. We compare our algorithm with various baselines,
		and show that it is both effective and efficient.
\end{itemize}

The rest of the paper is organized as follows. Related work is discussed
in \textsection\ref{sec:related}. We formalize the problem in
\textsection\ref{sec:probdef} and develop our algorithm in
\textsection\ref{sec:algo}. The experimental analysis is presented in
\textsection\ref{sec:exp}, while 
\textsection\ref{sec:concl} summarizes the paper and discusses future work.



\section{Related Work}
\label{sec:related}

We summarize related work under three headings. \\ 
\noindent 
{\bf Identifying Influencers.} Identifying influencers has been extensively studied as the problem of
influence maximization. The first work of this kind is due to 
Domingos et al.~\cite{domingos01}. They refer to users' influence as \emph{network
value} and model it as the expected lift in profit due to influence
propagation. Thus, the network value of a customer is captured as a number.  
Later, Kempe
et al.~\cite{kempe03} formulated this as a discrete optimization
problem: select $k$ influencers in a given social network such that by
targeting them, the expected spread of the influence is maximized, assuming 
the propagation follows a diffusion model such as independent cascades or linear 
threshold or their variants. The
problem is \NPhard. However, the objective function satisfies the nice
properties of monotonicity and submodularity, under the diffusion models considered, 
allowing a simple greedy algorithm to provide a $(1-1/e-\epsilon)$-approximation 
to the optimal solution, for any $\epsilon > 0$
\cite{submodular}. 
\AG{Further exploiting these properties, Leskovec et al.~\cite{LeskovecKDD07} proposed a lazy forward optimization that dramatically improves the
efficiency of the greedy algorithm. The idea is that the marginal gain
of a node in the current iteration cannot be better than its
marginal gain in the previous iterations.}
Goyal et al.~\cite{goyal2012} proposed a direct data driven approach to social
influence maximization. They show this alternative approach is both accurate
(in predicting the influence spread) and is scalable, compared to the 
probabilistic approach of Kempe et al.~\cite{kempe03}. \AG{Their work also highlights the 
importance of validating the influence prediction and spread.}

Considerable work has been done on analyzing social influence on blogs
and micro-blogs. 
Agarwal et al.~\cite{agarwal08}
investigate the problem of identifying influential bloggers in a
community.  They show the most influential bloggers are not
necessarily the most active. Gruhl et al.~\cite{Gruhl2004} analyze 
information diffusion in blogspace by characterizing the individuals and
the topics of their blog postings. 
In \cite{Gomez-RodriguezLK10}, the
authors look into the problem of inferring networks of diffusion and
influence in blogspace.
Weng et al.~\cite{twitterrank} develop a topic sensitive Pagerank-like measure (called Twitterrank) for
ranking users based on their influence on given topics. 
Cha et al.~\cite{cha10} compare three different measures of influence
-- indegree (number of followers), retweets and user mentions, with regard to 
their ability to characterize influencers. They
observe that users who have a large number of followers are not
necessarily influential in terms of spawning off retweets or mentions. 
Romero et al.~\cite{romero2011} showed that the majority of users act as
passive information consumers and do not forward the content to the
network. 
Bakshy et al.~\cite{bakshy2011} find that the largest cascades tend to
be generated by users who have been influential in the past and who have
a large number of followers. 

The problem of identifying influencers, and indeed influence maximization, 
is fundamentally different from our problem \probname. The objective of
\probname is to allow a human (or marketer) to (independently) validate
a given influencer, by generating human
readable, crisp explanations. The explanations consist of features from
action and user dimensions with relevant statistics and are generated in
a way such that they are able to cover the maximum amout influence, 
in terms of followups.

Since our explanations are built of action and user features, works on 
topic-sensitive influence analysis
\cite{TangSWY09, Liu2010, twitterrank, trendsetters, Barbieri2012} and influence
based community detection \cite{leaders, Barbieri2013} are relevant and 
we survey these next. 

\noindent 
{\bf Topics.} 
Tang et al.~\cite{TangSWY09} introduce the problem of
topic-based social influence analysis. Given a social network and
a topic distribution for each user, the problem is to find
topic-specific subnetworks, and topic-specific influence weights
between members of the subnetworks. 
Liu et al.~\cite{Liu2010} propose a generative model which utilizes the
content and link information associated with each node (which can
be a user, or a document) in the network, to mine topic-level direct
influence. They use  Gibbs sampling  to estimate the
topic distribution and influence weights.
Weng et al.~\cite{twitterrank}, as described earlier, propose a topic
sensitive Pagerank-like measure to rank users of Twitter. 
In \cite{trendsetters}, the authors define \emph{trendsetters} as the
``early adopters'' who spread the new ideas or trends before they become
popular. They also propose a Pagerank-like measure to identify
trendsetters. Barbieri et al.~\cite{Barbieri2012} extend classical
propagation models like linear threshold and independent cascade
\cite{kempe03} to handle topic-awareness. Our problem is given a network, 
past information cascades in the form of an action log, 
and a seed node, we need to 
generate a compact explanation of the way the influence spread of 
the seed is distributed, which is not addressed by any of these works.

\noindent 
{\bf Communities.} 
Another related line of work is influence-based community detection
\cite{leaders, Barbieri2013}. Goyal et al.~\cite{leaders} define the
notion of ``tribe-leaders'' -- leaders (or influencers) who are followed
up by the same set of users, on several actions. They apply a pattern
mining framework to discover them. Barbieri et
al.~\cite{Barbieri2013} propose a generative model to detect
communities incorporating  information cascades.

\GB{In} contrast \GB{to} the above mentioned papers, our goal is not to model 
topics or to detect communities, but to \emph{describe the influence
distribution of a given user}, by generating 
explanations consisting of interesting features from action
and user dimensions. 
\emph{To the best of our knowledge, this is the first research study to
provide explanations for the purpose of influencer validation.}






\section{Problem Definition}
\label{sec:probdef}

\GB{We consider a directed social graph, $G = (V,D)$ over a set of users $V$ 
where each arc $(u,v)\in D$ 
indicates that user $v$ follows user $u$,\footnote{Our ideas and 
algorithms easily extend to undirected graphs such as those corresponding 
to friendship links.} 
and a propagation log $\mathbb{L}$, 
a set of triples $(u, a, t_u)$ signifying that user $u$ performed action $a$ at time $t_u$.}
When the action $a$ is clear from the context, by $t_u$ we mean the time at which 
user $u$ performed action $a$. 
We say an action $a$ is \emph{propagated} from
$u$ to $v$ if $(u,v) \in D$, and the log $\mathbb{L}$ cotains the tuples 
$(u,a,t_u)$ and $(v,a,t_v)$ for some $t_u$ and $t_v$, such that $t_u < t_v$. 
This defines a \emph{propagation graph} of $a$ as a directed graph $G(a)
= (V(a), D(a))$, with $V(a) = \{u \in V | \exists t_u: (u, a, t_u) \in
\mathbb{L}\}$ and $D(a) = \{(u,v) \in D | u \in V(a), v \in V(a), \mbox{
and } t_u < t_v\}$.  Define an \emph{influence cube} $\C$ over the
dimensions Users (as influencers), Actions and Users (as followers) as
follows: for a cell $(u, a, v)$, $\C(u,a,v) = 1$ if there exists a
(directed) path from $u$ to $v$ in $G(a)$, i.e., $v$ performed action
$a$ after $u$ did. All other cells have value $0$. 

Given a user $u$, by a \emph{followup} of u, we mean a cell $(u,a,v)$
for which $\C(u,a,v) = 1$.  The \emph{followup set} of $u$ is then the
set of followups of $u$: $\M_u = \{(u,a,v) \mid \C(u,a,v) = 1\}$. When
the user is understood from the context, we use $\M$ instead of $\M_u$.
We assume users are equipped with a set of features (e.g., age, location
etc), and similarly for actions (e.g., topic). Descriptions for followup
sets are derived from attributes by means of predicates of the form
$\texttt{A} = \texttt{val}$ where \texttt{A} is an attribute and 
\texttt{val} is a value from its domain. \LL{We assume numeric attributes are 
binned into appropriate intervals. Thus it suffices to consider 
only equality. 
E.g., year = pre-1997, maturity-rating = ``rated R'', and 
gender = female are
predicates/features.} {\em We use the terms predicates and features
interchangeably}.  Let $\calP$ be the set of all predicates. Consider a
cell $(u,a,v)$ in $\M_u$, the followup set of user $u$, and a predicate
$p \in \calP$, we say the cell satisfies the
predicate, $(u, a, v) \models p$, provided 
\eat{that the $p$ is derived from
the attribute-value of action $a$ or from user $v$. } 
either $p$ is a user predicate and user $u$ satisfies this predicate or 
$p$ is an action predicate  and action $a$ satisfies this predicate. 
%
For a predicate $p$, we define $\M^p
= \{(u,a,v) \mid (u,a,v)\in\M \,\&\,(u,a,v) \models p\}$, i.e., the subset of followups
satisfying the predicate. We define an \emph{explanation} as a
conjunction of one or more (user and/or action) predicates. 
Given an explanation $E$, we define $\M^E = \bigcap_{p \in E} \M^{p}$.
i.e., the set of followups satisfying
all the predicates in $E$. We define the \emph{coverage} of an
explanation to be $\cov(E) = |\M^E|$, i.e., the number of followups
satisfying $E$.

\eat{ 
Essentially, our problem takes as input an influence cube $\C: Users
\times Actions \times Users$. Each cell $(u, a, v) \in \C$ contains a boolean
value: it is 1 if $u$ performed an action $a$ and user $v$ followed it
up, and $v$ is socially connected to $u$. We next provide a way to
build the cube $\C$ from a commonly used data for social influence
analysis \cite{leaders, amit2010, goyal2012}: social graph and
propagation log.

Suppose we are given a directed social graph $G = (V,D)$ such that a
directed
edge $(u,v) \in D$ implies that user $v \in V$ follows user $u \in V$.
Thus, the influence (or information) flows from $u$ to $v$. Suppose we are also
given a propagation log $\mathbb{L}$ in which a tuple $(u, a, t_u) \in
\mathbb{L}$ implies that user $u$ performed action $a$ at time $t_u$.
Given such input, we say that an action $a$ is \emph{propagated} from
user $u$ to user $v$ if $(u,v) \in D$ and $t_v > t_u$. 
This defines a
\emph{propagation graph} of $a$ as a directed graph $G(a) = (V(a),
D(a))$, with $V(a) = \{u \in V | (u, a, t_u) \in \mathbb{L}\}$ and $D(a)
= \{(u,v) \in D | u \in V(a), v \in V(a), t_v > t_u\}$. Now, we
can build the cube $\C$ as follows: we say that a cell $(u, a, v) \in \C$ is
1 if there exists a path from $u$ to $v$ in $G(a)$. All other cells have
value 0 in $\C$. 

Given a user $u$ of which we are interested in analyzing influence, we
define its \emph{followup set} $\M_u$ as the set of cells $(u, \cdot,
\cdot) \in \C$ whose value is 1. Hereforth, for the ease of exposition,
we drop the subscript $u$ from $\M_u$ and refer to it as $\M$.

In addition to $\C$, we also take as input $\F$: the set of binary features, for both actions
and users. For each attribute-value pair (in action or user dimension), we construct a binary feature.
For instance, an attribute ``Gender'' in user dimension may assume two values -- Male and
Female. Correspondingly, we construct two features -- Gender\_Male and
Gender\_Female. A feature may be either ``absent'' or ``present''.  Let
$\F_v$ be the set of user features present for user $v$ and $\F_a$ be
the set of action features present for action $a$.  Then, the
features present for a cell $(u,a,v)$, is defined to be $\F_{(a,v)} =
\F_a \cup \F_v$. Furthermore, with $\M^f \subseteq \M$, we denote the
set of cells selected by the feature $f$. That is, $\M^f = \{(u,a,v) | f
\in \F_{(a,v)} \}$.

Our goal is to construct a set of $k$ explanations for a user's followup set, 
each of which is of
length at least $l$ such that these explanations cover as much influence
of $u$ as possible.  } 

Our goal is to provide explanations for the followup set of a 
user (candidate influencer). On one hand, we would like each explanation 
to be as informative as possible. On the other, the total size of explanations 
should be concise or crisp so that a human (marketer) can quickly 
make sense of them. At the same time, between them, the explanations should 
cover as much ``influence mass'' as possible. We formalize these intuitions by 
insisting that each explanation should have length $\ge l$ and ask for a set 
of at most $k$ explanations $\E = \{E_1, ..., E_k\}$ such that the number of followups 
covered by these explanations is as large as possible. For a set of explanations 
$\E$, we extend coverage as follows: define $\M^{\E} = \bigcup_{E\in\E} \M^E$ and 
finally, define the \emph{coverage} of a set of explanations as 
$\cov(\E) = |\M^{\E}|$, i.e., the number of followps in $\M$ which satisfy 
at least one explanation in $\E$. That is,
\begin{align}
	\cov(\E) = |\M^{\E}| = | \bigcup_{E\in\E} \M^E | = |
	\bigcup_{E\in\E} \bigcap_{p \in E} \M^{p} |
\end{align}



\eat{ 
Intuitively, we wish each explanation to be as expressive as possible,
and thus we impose $l$ as the lower bound on the length of an explanation.
Moreover, we want to describe as much influence (or as many followups)
as possible with a small number of explanations. Hence, we set $k$ as
the upper bound on the number of explanations. 
With $E \subseteq 2^{\F}$, we denote a set of features that constitutes
one explanation. Similarly, with $\E \subseteq 2^{2^\F}$, we denote a 
set of
explanations. 
We use $\M^E \subseteq \M$ as the set
of the cells selected by features in $E$, as follows.
\begin{align}
\M^E = \bigcap_{f \in E} \M^f = \{(u,a,v) | \forall f \in E: f \in \F_{(a,v)} \}
\end{align}

Moreover, we define $\cov(E) : 2^\F \rightarrow \mathbb{R}$ as the
\emph{coverage} of an explanation $E$. And similarly, we overload
$\cov(\cdot)$ and let $\cov(\E) :
2^{2^\F} \rightarrow \mathbb{R}$ denote the 
\emph{coverage} of explanations $\E$, for the user in the context $u$. 
\begin{align}\label{eq:cov}
\cov(\E) = | \M^\E | = | \bigcup_{E \in \E} \M^E |
\end{align}

where $\M^\E$ denote the set of cells selected by explanations $\E$. We
are now ready to define our problem \probname (PROviding eXpLanations) formally. 
} 

\LL{Note the term coverage is defined for a single explanation as well as for a 
set of explanations. In discussing  
the properties of the coverage function,} \GB{we 
consider }
\LL{both $\sigma(E): 2^{\calP} \rightarrow \mathbb{R}$, coverage of a single explanation as a 
function of the features in the 
explanation, as well as 
$\sigma(\E): 2^{2^{\calP}} \rightarrow \mathbb{R}$, coverage of a set of explanations as a 
function of the explanations 
in the set $\E$. The notation and the context should make it clear.}

The main problem we study in this paper is \probname (PROviding eXplanations for
validating the network value of Influencers): 

\begin{problem}[\probname]\label{prob}
{\em
Given a user $u$, followup set $\M$, the available user and action
predicates $\calP$, and numbers $k$ and $l$, find a set of at most $k$
explanations $\E = \{E_1, ..., E_k\}$, where each explanation $E_i$ is a
conjunction of at least $l$ (user/action) predicates such that
$\cov(\E)$ is maximized.
}
\end{problem}

The lower bound $l$ on the size of each explanation captures the
intuition that explanations should be informative. The upper bound $k$
on the number of explanations captures the intuition that overall the
explanations should be crisp. At the same time, problem asks for the
influence mass covered (coverage) to be maximum.

\subsection{Hardness of PROXI}
Not surprisingly, it turns out that \probname is \NPhard. Unfortunately
though, not only it is \NPhard to solve exactly, it is \NPhard to
approximate within any reasonable factor, even when $k=1$, which in
other words is the problem of generating one explanation
(Thm.~\ref{thm:nphardness}).  We establish
the hardness by exploiting its equivalence with the problem of
\emph{Maximum $l$-Subset Intersection} (MSI for short). However, to develop
intuitions for building our algorithm, we show some interesting
properties of the objective function $\cov(\E)$. In particular, we show
that the function $\cov(\E)$ is monotonically increasing and
submodular (Thm.~\ref{thm:sm1}), while the function $\cov(E)$ is
monotonically decreasing and supermodular (Thm.~\ref{thm:sm2}). We
exploit these results to develop our algorithm
(\textsection\ref{sec:algo}).

\begin{theorem}\label{thm:nphardness}
Problem \probname is \NPhard to solve exactly. Moreover, it cannot be
approximated within a factor of $0.5 n^{1 - 2\epsilon} + O(1 - 3 \epsilon)$ for any
$\epsilon > 0$, unless P=NP.
\end{theorem}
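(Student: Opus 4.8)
The plan is to restrict attention to the case $k=1$ and show that in this case \probname is, up to an approximation-preserving transformation, exactly the Maximum $l$-Subset Intersection (MSI) problem; the claimed inapproximability then follows by invoking the known hardness of MSI. First I would observe that when $k=1$ the objective collapses to $\cov(E)=|\bigcap_{p\in E}\M^p|$, the size of an intersection of the predicate-induced sets $\{\M^p : p\in\calP\}$ over the ground set $\M$. Since $\cov(E)$ is monotonically decreasing as predicates are added (Theorem~\ref{thm:sm2}), any optimal explanation subject to $|E|\ge l$ uses exactly $l$ predicates; hence the ``at least $l$'' constraint of \probname coincides with the ``exactly $l$'' selection of MSI.

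Next I would give the reduction from an arbitrary MSI instance --- a universe $U=\{e_1,\dots,e_n\}$, sets $S_1,\dots,S_m\subseteq U$, and parameter $l$ --- to a \probname instance with $k=1$. Fix a single influencer $u$; for each element $e_j$ introduce a distinct action $a_j$ and follower $v_j$ together with the followup cell $(u,a_j,v_j)$, so that $\M=\{(u,a_j,v_j)\}$ is in bijection with $U$. For each set $S_i$ introduce a binary action attribute and let action $a_j$ satisfy the corresponding predicate $p_i$ precisely when $e_j\in S_i$; then $\M^{p_i}$ is exactly the image of $S_i$ under this bijection. Consequently, for any explanation $E=\{p_{i_1},\dots,p_{i_l}\}$ we have $\cov(E)=|\bigcap_t \M^{p_{i_t}}|=|\bigcap_t S_{i_t}|$, so maximizing coverage is identical to maximizing the intersection size. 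Because the actions are all distinct, arbitrary set systems are realizable, so this encoding is completely general and runs in polynomial time with $|\M|=n$.

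The correspondence is value-exact and gap-preserving in both directions: an MSI solution of $l$ sets is directly a feasible explanation of the same coverage, and any feasible explanation can be pruned down to exactly $l$ predicates without decreasing coverage (again by Theorem~\ref{thm:sm2}), yielding an MSI solution of at least the same value. Hence an $\alpha$-approximation for \probname with $k=1$ would give an $\alpha$-approximation for MSI, and since $k=1$ is a special case, the same holds for general \probname. Plugging in the known inapproximability of MSI, with $n=|\M|$ the number of followups, yields both that \probname is \NPhard and the stated factor $0.5 n^{1 - 2\epsilon} + O(1 - 3 \epsilon)$, unless P$=$NP.

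I expect the main obstacle to be the inapproximability factor itself rather than the equivalence. Establishing the strong $n^{1-2\epsilon}$-type hardness for MSI requires a gap-amplifying reduction from a hard graph problem (for instance Maximum Clique or Maximum Balanced Biclique, viewing the chosen sets and their common elements as one side of a biclique), and care must be taken that the parameter blow-up in that reduction leaves the exponent, the leading constant $0.5$, and the additive term intact once transported through the bijection above. The realization step --- checking that the predicate ``attribute $=$ value'' formalism can encode an arbitrary incidence structure between sets and elements --- is routine given that each followup is assigned its own action, but it should be stated explicitly so that no hidden restriction on $\calP$ silently weakens the reduction.
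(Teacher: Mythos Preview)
Your proposal is correct and follows essentially the same approach as the paper: restrict to $k=1$, identify the resulting problem with Maximum $l$-Subset Intersection via the natural bijection (elements $\leftrightarrow$ followup cells, sets $\leftrightarrow$ predicate-induced subsets $\M^p$), and import the known hardness of MSI. You are in fact more careful than the paper in two places---you explicitly invoke monotonicity to reconcile the ``at least $l$'' constraint with an ``exactly $l$'' selection, and you spell out that distinct actions let the attribute formalism realize an arbitrary set system---but these are refinements of the same argument rather than a different route.
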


\begin{proof} 
We prove the claim for the special case when $k=1$. In this 
case, the problem reduces to finding exactly one explanation $E$, of
length $l$ such
that $\cov(E)$ is maximized. Since $\cov(E)$
is defined as the intersection of sets $\M^p$ for all $p \in E$, the
problem is equivalent to \emph{Maximum $l$-Subset Intersection} (MSI)
Problem \cite{xavier2012, shieh2012}: Given a collection of sets $\S =
\{S_1, S_2, \ldots, S_m\}$ over a universe of elements $\mathcal{U} =
\{e_1, e_2, \ldots, e_n\}$,
the objective is to select (no less than) $l$ sets $\S' \subseteq \S$,
$\S' = \{S_{i_1}, S_{i_2}, \ldots, S_{i_l}\}$
such that its intersection size, $|S_{i_1} \cap S_{i_2} \cap \cdots S_{i_l}|$, is maximum.

It is easy to see that MSI is equivalent to \probname when $k=1$. An
element $e \in \mathcal{U}$ in MSI corresponds to a cell $(u,a,v)$ with
a value 1 in \probname. Similarly, a set $S$ corresponds to $\M^p$, the
set of cells selected by a predicate $p$. Furthermore, the objective is
equivalent -- select $l$ sets ($l$ predicates) such that the resulting
intersection size is maximum. The formal reduction from MSI to \probname 
and the other way around
is thus straightforward and we skip it for brevity.

It is well known that MSI is \NPhard, and is \NPhard to approximate it within a factor
a $0.5 n^{1 - 2 \epsilon} + O(1 - 3 \epsilon)$ \cite{shieh2012}. Given the equivalence 
between \probname and MSI, the theorem follows.
\end{proof}

\eat{
\note{Laks: ${\cal E}*$ vs. ${\cal E}'$ issue needs to be fixed. Glenn: resolved}
}

\begin{theorem}\label{thm:sm1}
The function $\cov(\E): 2^{2^{\calP}} \rightarrow \mathbb{R}$ is
monotonically increasing and submodular. That is, $\forall \E \subseteq
\E' \subseteq 2^{2^\calP} : \cov(\E) \le \cov(\E')$ and 
$ \cov(\E \cup \{E\}) - \cov(\E) \ge \cov(\E' \cup \{E\}) - \cov(\E')$.
\end{theorem}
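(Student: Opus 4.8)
The plan is to recognize $\cov(\cdot)$ as a \emph{coverage function} over the fixed family of sets $\{\M^E \mid E \subseteq \calP\}$ and then invoke the standard argument that such functions are monotone and submodular. The key observation is that once we fix attention on the explanations themselves, the associated followup subsets $\M^E = \bigcap_{p\in E}\M^p$ become \emph{fixed} subsets of the universe $\M$, and $\cov(\E)=|\bigcup_{E\in\E}\M^E|$ is simply the size of their union. The internal structure of each $\M^E$ as an intersection of predicate-sets plays no role in this theorem (it is precisely the object of study in Thm.~\ref{thm:sm2}); here every $\M^E$ can be treated as an opaque set.

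First I would establish monotonicity. If $\E \subseteq \E'$, then $\M^{\E}=\bigcup_{E\in\E}\M^E$ is a union over a smaller index set than $\M^{\E'}=\bigcup_{E\in\E'}\M^E$, so $\M^{\E}\subseteq\M^{\E'}$, whence $\cov(\E)=|\M^{\E}|\le|\M^{\E'}|=\cov(\E')$.

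Next, for submodularity, I would rewrite the marginal gain of adding a single explanation $E$ to a set $\E$ as a set difference: $\cov(\E\cup\{E\})-\cov(\E)=|\M^E\setminus\M^{\E}|$, since the only newly covered followups are those in $\M^E$ not already present in $\M^{\E}$. The diminishing-returns inequality then follows directly from the monotonicity of the union just proved: for $\E\subseteq\E'$ we have $\M^{\E}\subseteq\M^{\E'}$, so any element of $\M^E$ excluded from the larger union is also excluded from the smaller one, i.e. $\M^E\setminus\M^{\E'}\subseteq\M^E\setminus\M^{\E}$. Taking cardinalities yields $\cov(\E'\cup\{E\})-\cov(\E')\le\cov(\E\cup\{E\})-\cov(\E)$, which is exactly the claimed submodularity.

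I do not anticipate a genuine obstacle, as $\cov(\E)$ is the canonical monotone submodular set-cover objective. The only point requiring care is the clean rewriting of the marginal gain $\cov(\E\cup\{E\})-\cov(\E)$ as $|\M^E\setminus\M^{\E}|$; this identity underlies both properties and reduces the diminishing-returns step to a one-line consequence of the inclusion $\M^{\E}\subseteq\M^{\E'}$.
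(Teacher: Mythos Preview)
Your proposal is correct and follows essentially the same approach as the paper: both arguments reduce to showing that the marginal gain of adding $E$ equals the cardinality of a set difference (the paper writes it as $|(\M^{\E}\cup\M^{E})\setminus\M^{\E}|$, you as $|\M^{E}\setminus\M^{\E}|$, which are identical), and then use the inclusion $\M^{\E}\subseteq\M^{\E'}$ to derive the diminishing-returns inequality. Your presentation is slightly more streamlined, but the underlying set manipulations are the same.
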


\begin{proof}
First, we show that the objective function $\cov(\E)$ is monotonically
increasing. By definition of $\cov(\E)$, we have,
\begin{align*}
\cov(\E \cup \{E\}) - \cov(\E) = |\M^{\E} \cup \M^{E}| - |\M^{\E}|
\end{align*}

Clearly, $\M^{\E} \cup \M^{E}$ is a superset of $\M^\E$, the above
quantity is non-negative, implying that $\cov(\E)$ is monotonically
increasing.
Next, we show the property of submodularity. 
\begin{align*}
\cov(\E' \cup \{E\}) & - \cov(\E') = |\M^{\E'} \cup \M^{E}| - |\M^{\E'}|\\
&= |(\M^{\E'} \cup \M^{E}) \setminus \M^{\E'}| \\ 
&= |(\M^{\E} \cup \M^{\E'} \cup \M^{E}) \setminus (\M^{\E} \cup \M^{\E'}) |  \\
&= |((\M^{\E} \cup \M^{E}) \setminus \M^{\E}) \setminus \M^{\E'} | 
\end{align*}

Since set subraction may only remove set elements,
\begin{align*}
\cov(\E' \cup \{E\})  - \cov(\E') &\le |(\M^{\E} \cup \M^{E}) \setminus \M^{\E} | \\
&\le \cov(\E \cup \{E\}) - \cov(\E)
\end{align*}

which is what we wanted to prove. 
\end{proof}

\eat{
\note{Laks: For monotone non-increasing, isn't it obvious the more features you add 
the smaller the intersection? For submodularity, I wonder whether there is a need for 
$E^*$. Can't we prove it from first principles just using $E$ and $E'$ and $f\not\in E'$? 
The proof looks more complicated than it needs to be. Glenn: resolved} 
}

\begin{theorem}\label{thm:sm2}
The function $\cov(E): 2^\calP \rightarrow \mathbb{R}$ is monotonically decreasing and
supermodular. That is, $\forall E \subseteq
E' \subseteq 2^\calP : \cov(E) \ge \cov(E')$ and 
$ \cov(E \cup \{p\}) - \cov(E) \le \cov(E' \cup \{p\}) - \cov(E')$.
\end{theorem}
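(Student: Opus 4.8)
The plan is to treat both claims as elementary set-theoretic facts about the cardinality of an intersection, since $\cov(E) = |\M^E| = |\bigcap_{p \in E}\M^p|$ is nothing more than the size of the intersection of the sets $\M^p$ ranging over the predicates $p$ in $E$. Both monotonicity and supermodularity will then reduce to a single containment observation, so no diffusion-model or combinatorial machinery is needed.

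For monotonicity I would first record that intersecting over a larger index set can only shrink the result. Concretely, if $E \subseteq E'$ then any cell in $\M^{E'}$ must satisfy every predicate in $E'$, hence in particular every predicate in the smaller set $E$, so $\M^{E'} = \bigcap_{p \in E'}\M^p \subseteq \bigcap_{p \in E}\M^p = \M^E$. Taking cardinalities immediately yields $\cov(E') \le \cov(E)$, which is exactly the monotone non-increasing property. This containment $\M^{E'} \subseteq \M^E$ is the workhorse for the rest of the proof.

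For supermodularity the key move is to rewrite each marginal difference as a (negated) set-difference cardinality. Since conjoining a predicate $p$ gives $\M^{E \cup \{p\}} = \M^E \cap \M^p \subseteq \M^E$, I would write $\cov(E \cup \{p\}) - \cov(E) = |\M^E \cap \M^p| - |\M^E| = -|\M^E \setminus \M^p|$, and likewise $\cov(E' \cup \{p\}) - \cov(E') = -|\M^{E'} \setminus \M^p|$. The desired inequality $\cov(E \cup \{p\}) - \cov(E) \le \cov(E' \cup \{p\}) - \cov(E')$ is therefore equivalent to $|\M^{E'} \setminus \M^p| \le |\M^E \setminus \M^p|$. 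This last inequality is now immediate from the monotonicity step: $\M^{E'} \subseteq \M^E$ forces $\M^{E'} \setminus \M^p \subseteq \M^E \setminus \M^p$, and cardinality respects containment. Intuitively, $|\M^E \setminus \M^p|$ counts the followups ``lost'' when $p$ is added to the explanation $E$, and restricting attention to the already-smaller set $\M^{E'}$ can only lose fewer of them.

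There is no real obstacle here; the one place to be careful is the sign bookkeeping, since both marginal differences are non-positive and supermodularity compares them as signed numbers rather than as magnitudes. A useful sanity check is that the smaller explanation $E$ suffers the larger absolute drop, so its marginal change is the more negative of the two, which is precisely consistent with the stated inequality.
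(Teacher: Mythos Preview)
Your proof is correct and follows essentially the same approach as the paper: both rewrite the marginal change as the negative cardinality of a set difference and then use the containment $\M^{E'}\subseteq\M^E$ (from $E\subseteq E'$) to compare. Your version is slightly more streamlined, writing $\M^E\setminus\M^p$ directly rather than the paper's equivalent $\M^E\setminus(\M^E\cap\M^p)$ and avoiding the intermediate substitution $\M^{E'}=\M^E\cap\M^{E'}$, but the underlying idea is identical.
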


\begin{proof}
First, we show that the objective function $\cov(E)$ is monotonically
decreasing. By definition of $\cov(E)$, we have,
\begin{align*}
\cov(E \cup \{p\}) - \cov(E) = |\M^{E} \cap \M^{p}| - |\M^{E}|
\end{align*}

Clearly, $\M^{E} \cap \M^{p}$ is a subset of $\M^E$, the above
quantity is non-positive, implying that $\cov(E)$ is monotonically
decreasing.
We next show the property of supermodularity. 

\begin{align*}
\cov(E' \cup \{p\}) & - \cov(E') = - (|\M^{E'}| - |\M^{E'} \cap \M^{p}|)\\
&= - |\M^{E'} \setminus (\M^{E'} \cap \M^{p})| \\
&= - |(\M^{E} \cap \M^{E'}) \setminus (\M^{E} \cap \M^{E'} \cap \M^{p})|  \\
&= - |\M^{E'} \cap (\M^{E} \setminus (\M^{E} \cap \M^{p}))| 
\end{align*}

Since set intersect may only remove set elements,
\begin{align*}
\cov(E' \cup \{p\})  - \cov(E') &\ge - |(\M^{E} \setminus (\M^{E} \cap
\M^{p}) |\\
&\ge \cov(E \cup \{p\}) - \cov(E)
\end{align*}

which is what we wanted to prove. 
\end{proof}


\eat{

\note{Amit: Notes above are written by me.}

The goal is the creation of detailed explanations by enssuring the use
of a sufficient number of user and action attributes while capturing a
substantial porition of the target user's influence. Given as input is a
followup set, $\M$, where $\M_u$ is the set of followups that correspond
to actions of the influencer $u$. A followup element, $(u,a,v)$, is
present in $\M_u$ if user $v$ followedup on $u$'s action $a$. Also
provided is a set of binary attributes, $\F$, for both actions and
followers. An attribute may be either ``absent'' or ``present''. Let
$\F_u$ be the set of user attributes present for user $u$ and $\F_a$ be
the set of action attributes present for action $a$. Then, the
attributes present for a followup element, $(u,a,v)$, is defined to be
$\F_{(a,v)} = \F_a \cup \F_v$. 

In construction of an explanation we are interested in finding user and
action attributes that select many of the followups contained in $\M_u$.
Each attribute selects a subset of the followups present in $\M_u$.  The
selected followups are those that have the attribute present.  Let
$\M_u^f$ be the subset of $\M_u$ selected by attribute $f \in \F$. An
explantion that makes use of a conjunction of attribute is more specific
but comes at the consequence of potentially reducing the followups
captured. We are interested in finding a set of attribute, $F \subset
\F$, of at least a certain size that retains the largest number of
followups contained $\M_u$ in their intersection.  Formally, we want to
maximize the \textit{coverage} of the attribute set, where the coverage
of a set of attribute with respect $\M_u$, denoted by $\cov(F; \M_u)$,
is defined to be cardinality of the set $\M_u^F$.


\begin{align}
\M_u^F = \bigcap_{f \in F} \M_u^f 
\end{align}

\begin{problem}
{\em
Given a user $u$, followup set $\M$, the available attributes and
their correspondence to users and actions $\F$, and a number $k$, find a
set of attributes $F \subset \F: |F| \ge k$ such that $|\M_u^F|$ is maximized.
}
\end{problem}

The coverage of $F$ is monotonic decreasing in the number of attributes in
$F$ due to the addition of an attributes results in one more set
intersection. Intersecting an additional set with the previous result
can only maintain all elements of the previous result or remove
elements. Furthermore, the coverage lost by the addition of a attributes is
reduced as other attributes are added. That is, coverage is
supermodular, statisfing the property: 
$f(A \cup {v}) - f(A) \le f(B \cup {v}) - f(B)$ where $A \subset B$.

\begin{proof}
Given an arbitary $\M_u$, for syntactic simplicity let this be $\M$, and some current set of attributes $F$ the addition of an arbitrary attribute $f$ results in the marginal change of,

{\setlength\arraycolsep{0.1em}
\begin{eqnarray}
\cov(F \cup {f}; \M) - \cov(F; \M) &=& -|\M^F \setminus (\M^F \cap \M^f)| \nonumber
\end{eqnarray}
}

The equivalence holds due to $\M^F$ containing all elements of $\M^F \cap \M^f$.

Adding an arbitrary a set of attributes $F^{*}$, other than $f$, to the set $F$ to create a new attribute set $F' = F \cup F^{*}$.
Now, the addition of $f$ to $F'$ results in the marginal change of,

{\setlength\arraycolsep{0.1em}
\begin{eqnarray}
\cov(F' \cup {f}; \M) &-& \cov(F'; \M) = -|\M^{F'} \setminus (\M^{F'} \cap \M^f)| \nonumber \\
&=& -|(\M^F \cap \M^{F^{*}}) \setminus (\M^F \cap \M^{F^{*}} \cap \M^f)| \nonumber \\
&=& -|(\M^F \setminus (\M^F \cap \M^f)) \cap \M^{F^{*}}| \nonumber
\end{eqnarray}
}
Since set intersection may only remove set elements,

\begin{align}
|\M^F \setminus (\M^F \cap \M^f)| \ge |(\M^F \setminus (\M^F \cap \M^f)) \cap \M^{F^{*}}| \nonumber 
\end{align}

Hence, region coverage is supermodular.
\end{proof}

The problem of maximizing a supermodular function with a cardinality constraint is known to be NP-hard. The negation ,$-\cov(\cdot)$,  converts from maximizing a supermodular function to an equivalent minimization of a submodular function problem. Minimizing a general submodular function under a cardinality constraint is known to be inapproximable within $O(\sqrt{n/\log{n}})$ \cite{submodularhardness}.

Furthermore, we are interested in finding not just one region but rather a set of regions, such that even with each block having high specificity (large $k$) the collection of regions still captures the bulk followups.  An explanation, $E \subset 2^{\F}$, then is a set of sets of attributes. The coverage of an explanation, $\cov(E; \M_u)$, is defined to be the cardinality of $\M_u^E$.

\begin{align}
\M_u^E = \bigcup_{F \in E} \M_u^F 
\end{align}

\begin{problem}
{\em
Given a user $u$, followup set $\M$, the available attributes and their correspondence to users and actions $\F$, and numbers $k$ and $l$, find a set of sets of attributes $E \subset 2^{\F}: |E| \le l$ where $\forall_{F \in E}: |F| \ge k$, such that $\cov(E; \M_u)$ is maximized.
}
\end{problem}

The coverage of $E$ is monotonic increasing in the number of sets of attributes in $E$ due to the addition of a set of attributes results in one more set union. The union of an additional set with the previous result can only maintain all elements of the previous result or add elements. Furthermore, the coverage gained by the addition of a set of attributes is reduced as other sets of attributes are added. That is, explanation coverage is submodular, statisfing the property: 
$f(A \cup {v}) - f(A) \ge f(B \cup {v}) - f(B)$ where $A \subset B$.

\begin{proof}
Given an arbitary $\M_u$, for syntactic simplicity let this be $\M$, and some current set of sets of attributes $E$ the addition of an arbitrary set of attributes $F$ results in the marginal change of,

{\setlength\arraycolsep{0.1em}
\begin{eqnarray}
\cov(E \cup {F}; \M) - \cov(E; \M) &=& |(\M^E \cup \M^F) \setminus \M^E| \nonumber
\end{eqnarray}
}

The equivalence holds due to $\M_S^F$ containing all elements of $\M_S^F \cap \M_S^f$.

Adding arbitrary set of attributes other than $F$, $E^{*}$, to the set $E$ to create a new explanation $E' = E \cup E^{*}$.
The addition of $F$ to $E'$ results in the marginal change of,

{\setlength\arraycolsep{0.1em}
\begin{eqnarray}
\cov(E' \cup {F}; \M) &-& \cov(E'; \M) = |(\M^{E'} \cup \M^F) \setminus \M^{E'}| \nonumber \\
&=& |(\M^{E} \cup \M^{E^{*}} \cup \M^F) \setminus (\M^{E} \cup \M^{E^{*}})| \nonumber \\
&=& |((\M^{E} \cup \M^F) \setminus \M^{E}) \setminus \M^{E^{*}}| \nonumber 
\end{eqnarray}
}
Since set subraction may only remove set elements,

\begin{align}
|(\M^E \cup \M^F) \setminus \M^E| \ge |((\M^{E} \cup \M^F) \setminus \M^{E}) \setminus \M^{E^{*}}| \nonumber 
\end{align}

Hence, explanation coverage is submodular.
\end{proof}

While being known to be NP-hard maximization of a monotone submodular functions a greedy algorithm, repeatedly adding an element that gives the maximum marginal gain, approximates the optimal solution within a factor of $(1 - 1/e)$. If the internal problem of finding the optimal sets of attributes is resolved by an oracle the outer task of the selecting $l$ of these sets of attributes can provide an $(1 - 1/e)$ guarentee on the coverage of the explanation.

}

\section{Algorithm}
\label{sec:algo}

Even though \probname is \NPhard to approximate, the function
$\cov(\E)$ has nice properties as we show in Theorem
\ref{thm:sm1}. Nemhauser et al.~\cite{submodular} show
that maximizing monotonically increasing submodular
functions can be approximated within a factor of $(1-1/e)$ using a
greedy algorithm. Moreover, due to Feige \cite{feige98}, we know that
this is the best possible approximation factor that can be achieved in
polynomial time.  These results, in addition to Theorem \ref{thm:sm1},
suggest that the greedy heuristic which adds the current best explanation
$E$ to $\E$, until $|\E|$ is $k$ would be the best possible heuristic. 
However, the complex step here is to generate one explanation $E$, or
more generally, the next explanation $E$, such that the marginal
coverage $\cov(\E \cup \{E\}) - \cov(E)$ maximized, where $\E$ is the
current set of explanations. We showed that this particular problem is
\NPhard to approximate (see the proof of Theorem \ref{thm:nphardness}).
Thus, strictly speaking, we cannot expect to have an efficient algorithm with a provable
approximation guarantee for \probname.  

However, given the hardness of the problem, we believe that \GB{a} greedy
algorithm of successively generating \GB{explanations by repeatedly picking the best predicate would}
still be a good heuristic. More precisely, in any iteration, where $\E :
|\E| < k$ is the current set of explanations and $E : |E| < l$ is the
current explanation, the greedy algorithm picks the predicate, $p$,
\GB{that when added to $E$ gives an extended explanation that provides the maximum possible additional coverage, w.r.t.~$\E$}. That is,
$\cov(\E \cup \{E \cup \{p\}\}) - \cov(\E)$ is maximum.

Since the search space is massive, a naive greedy algorithm as explained
above would be extremely slow. So we focus our attention on making the
algorithm efficient, by cleverly avoiding unnecessary coverage
evaluations, in any given iteration.  In particular, we optimize our
algorithm by means of lazy evaluation.  Recall, the function $\cov(E):
2^{{\cal P}} \rightarrow \mathbb{R}$ is non-increasing and supermodular.
Thus, the lazy forward approach used by Leskovec et
al.~\cite{LeskovecKDD07} does not work here, as it relies on the
non-decreasing submodular nature of the objective function. We instead
exploit the fact that the coverage of a single explanation $\cov(E): 2^{{\cal
P}} \rightarrow \mathbb{R}$ is non-increasing in the number of features (predicates),
and devise a lazy evaluation optimization based on this.  The idea
is that, while constructing a 
single explanation, the 
\GB{marginal coverage of the explanation} after adding a predicate $p$
to explanation $E$ \GB{also} cannot increase (since $\sigma(E)$ is non-\GB{increasing}). 
Thus, by
maintaining a max-heap of predicates\GB{, $p$,} sorted on additional coverage of 
\GB{the extended explanation $E \cup \{p\}$}
w.r.t.~$\E$, 
we can avoid
coverage recomputations for many of the predicates, in any given iteration.




\eat{
Since the search space is massive, a naive greedy algorithm would be
extremely slow. So we focus our attention on making the algorithm
efficient, by cleverly avoiding unnecessary coverage evaluations, in any
given iteration.
In particular, we optimize our algorithm by
means of lazy evaluation. Since our algorithm seeks to add the next 
best explanation with maximum marginal coverage, we focus on 
optimizing the construction of the next best explanation. Each 
explanation consists of $l$ features. Recall, the function $\cov:
2^{{\cal P}} 
\rightarrow \mathbb{R}$ is non-increasing and supermodular. Thus, the 
lazy forward approach used by Leskovec et al.~\cite{LeskovecKDD07} 
does not work here, as it relies on the non-decreasing submodular nature of the (influence 
spread) function being optimized. We instead exploit the fact that 
coverage of a single explanation $\cov: 2^{{\cal F}} 
\rightarrow \mathbb{R}$ is non-increasing in the number of 
features, and devise a lazy evaluation optimization based on this. Intuitively, 
we maintain available features in a max heap. Each node contains a feature 
$f$ as well as the resulting coverage, if $f$ were to be added to a set of 
explanations $E$, i.e., $\cov(E\cup\{f\})$. If we find that the top element 
corresponds to the \emph{current} explanation $E$, we can remove it and add it to $E$, even if 
the coverage of any other heap node (say corresponding to feature $f'$) 
corresponds to a previously considered explanation 
$E'$, where $E'\subset E$, i.e., $\cov(E'\cup\{f'\})$, since 
$\cov(E\cup\{f'\}) \le \cov(E'\cup\{f'\})$, which in turn is 
$\le \cov(E\cup\{f\})$. 
}


We next explain our algorithm in detail, given in Algorithms
\ref{alg:MineExplanations} and \ref{alg:NextExplanation}.  In $Q$, we
store the max-heap of features. Each element $p$ in $Q$ represents a
predicate/feature with the following attributes: $p.cells$ denotes the
set of cells corresponding to the predicate, or equivalently $\M^p$;
$p.cov$ denotes the effective additional coverage of \GB{an explanation, $E$,} 
w.r.t.~$\E$, \emph{if} $p$ were 
added to the explanation, that is, $p.cov = \cov(\E \cup \{E \cup
\{p\}\}) - \cov(\E)$.
Due to our lazy evaluation optimization, $p.cov$ may not always store
the correct value. Instead, it may store an outdated value, which may
have been calculated in some earlier iteration. To keep track of it, we
use $p.flag$  to save the iteration when $p.cov$ was last updated.
Moreover, we mark a cell when it is covered by the current set of explanations.
Initially, all cells are unmarked, and $p.flag$ for all the
features is set to $0$. 
The heap $Q$ is sorted on $p.cov$. Our main subroutine \MineExplanations
adds one explanation $E$ at a time, in a greedy fashion, while the
subroutine \NextExplanation generates the next best explanation, again
in a greedy fashion.  




\begin{algorithm}
\caption{\MineExplanations} \label{alg:MineExplanations}
\begin{algorithmic}[1]
\begin{small}
\REQUIRE{$Q$, $k$, $l$} 
\ENSURE{$\E$}
	\STATE $\E \gets \emptyset$.
	\WHILE {$|\E| < k$}
		\STATE $p \gets Q.peek()$.
		\IF {$p.flag < |\E| \cdot l$}
			\STATE $Q.poll()$.
			\STATE $p.cov \gets $ \#cells in $p.cells$ for which
			$cell$ is not marked. 
			\STATE $p.flag \gets |\E| \cdot l$.
			\STATE Reinsert $p$ in $Q$ (and reheapify w.r.t.~$p.cov$).
		\ELSE 
			\STATE $Q' \gets copy(Q)$ (copy includes features'
			coverage).
			\STATE $E \gets \NextExplanation(Q', l, |\E|)$.
			\STATE $\E \gets \E \cup \{E\}$.
		\ENDIF
	\ENDWHILE 
\end{small}
\end{algorithmic}
\end{algorithm}

\begin{algorithm}
\caption{\NextExplanation} \label{alg:NextExplanation}
\begin{algorithmic}[1]
\begin{small}
\REQUIRE{$Q'$, $l$, $|\E|$} \ENSURE{$E$}
	\STATE $E \gets \emptyset$.
	\WHILE {$|E| < l$}
		\STATE $p \gets Q'.poll()$.
		\IF {$p.flag < |\E| \cdot l + |E|$}
			\STATE $p.cov \gets$ \#cells in $p.cells$ for which
			$cell.flag = |\E| \cdot l + |E|$ and $cell$ is not marked. 
			\STATE $p.flag \gets |\E| \cdot l + |E|$.
			\STATE Reinsert $p$ in $Q'$ (and reheapify w.r.t.~$p.cov$).
		\ELSE 
			\STATE $E \gets E \cup \{p\}$.
			\FOR {each $cell \in p.cells$} 
				\STATE \textbf{if} $|E|= l$ \textbf{and} $cell.flag =
				|\E| \cdot l + l$ \textbf{then}
				\STATE \hspace{8pt} mark the \textit{cell}.
				\STATE \textbf{else if} $|E|=1$ \textbf{then} $cell.flag
				\gets|\E| \cdot l + 1$.
				\STATE \textbf{else} $cell.flag \gets cell.flag + 1$.
			\ENDFOR
		\ENDIF
	\ENDWHILE 
\end{small}
\end{algorithmic}
\end{algorithm}

We first describe Algorithm \MineExplanations. 
$\E$ is initialized in line 1 and we iterate until $k$
explanations are generated (lines 2-12). In each of these $k$
iterations, we get the next best explanation by calling \NextExplanation,
and add it to $\E$ (lines 10-12). We make a copy of $Q$ before calling
\NextExplanation, because the ordering of the heap $Q$ can be changed by
the subroutine, and thus corrupt the original ordering. Other lines of
the algorithm implement the lazy evaluation optimization. The feature with
the maximum coverage is taken from the heap $Q$, without removing the
it from $Q$ (i.e., $peek()$, in line 3). 
If there is a
need to recompute the coverage $p.cov$ of the feature $p$ (this
condition is tested using $p.flag$), then we do so in line 6, after
removing it from $Q$ in line 5 (i.e., $poll()$). The flag is
updated in line 7 and the heap $Q$ is re-heapified accordingly.

Next, we describe Algorithm \ref{alg:NextExplanation}, which generates
one explanation at a time. This algorithm also employs a greedy strategy
to select the features in a lazy manner. $E$ is initialized in
line 1 and we iterate until we generate $l$ features in $E$. 
In this subroutine, we also assign a $flag$ to each cell. Intuitively,
$cell.flag$ stores the number of features in the current explanation $E$
that covers the cell. For example, $cell.flag$ is $|\E| \cdot l + 1$,
if it is covered by exactly one feature in $E$, and similarly, it is
$|\E| \cdot l + |E|$ when all features in $E$ cover the cell. It
should be noted that the term $|\E| \cdot l$ is added to ensure that the
flag values across different iterations of Algorithm
\ref{alg:MineExplanations} don't mix up.

We again exploit $p.flag$ to track the iteration when the coverage of
the predicate/feature was last updated. As in $cell.flag$, we also add $|\E| \cdot
l$ in $p.flag$ to avoid the mix up in the flag values across different
iterations.  If there is no need to recompute $p.cov$, then we add $p$
to $E$ (line 9). Next, if $|E| = l$, it implies that this is the last
iteration of Algorithm 2. In that case, we must mark the cells which are
selected by the explanation $E$. Recall that for a cell to be covered by
all features in $E$, its flag should be $|\E| \cdot
l + l$ (lines 11-12). On the other hand, if $|E| = 1$, indicating that
we just selected the first feature in $E$, then $cell.flag$ is initialized to $|\E|
\cdot l + 1$ (line 13). In other cases, that is, $1 < |E| < l$, we
increment $cell.flag$ (line 14). 

Lines 4-7 implement the lazy forward optimization.
If $p.flag$ indicates
that we must recompute $p.cov$, then we do so in line 5. Note that the
coverage of feature $p$ here is the number of cells in $p.cells$ that
are not marked, and which are covered by all the features in $E$. We
test this condition by checking that $cell.flag$ is $|\E| \cdot l +
|E|$. We update $p.flag$ in line 6, and re-heapify $Q'$
accordingly in line 7.


\eat{

\eat{
\begin{algorithm}[h]
\caption{\MineExplanations} \label{alg:MineExplanations}
\begin{algorithmic}[1]
\begin{small}
\REQUIRE{$Q$, $k$, $l$} 
\ENSURE{$\E$}
	\STATE $\E \gets \emptyset$.
	\WHILE {$|\E| < k$}
		\STATE $E \gets \NextExplanation(Q, |\E|, l)$.
		\STATE $\E \gets \E \cup \{E\}$.
	\ENDWHILE 
\end{small}
\end{algorithmic}
\end{algorithm}

\begin{algorithm}[h]
\caption{\MineExplanations Better} \label{alg:MineExplanations}
\begin{algorithmic}[1]
\begin{small}
\REQUIRE{$Q$, $k$, $l$} 
\ENSURE{$\E$}
	\STATE $\E \gets \emptyset$.
	\WHILE {$|\E| < k$}
		\STATE $f \gets Q.pop()$.
		\IF {$f.k < |\E|$}
			\STATE $f.cov \gets $ \#cells in $f.cells$ that are not marked. 
			\STATE Reinsert $f$ in $Q$ (and reheapify w.r.t.~$f.cov$).
			\STATE $f.k \gets |\E|$.
		\ENDIF
		\STATE $Q' \gets copy(Q)$.
		\STATE $E \gets \NextExplanation(Q', |\E|, l)$.
		\STATE $\E \gets \E \cup \{E\}$.
	\ENDWHILE 
\end{small}
\end{algorithmic}
\end{algorithm}

\begin{algorithm}[h]
\caption{\NextExplanation} \label{alg:NextExplanation}
\begin{algorithmic}[1]
\begin{small}
\REQUIRE{$Q'$, $|\E|$, $l$} \ENSURE{$E$}
	\STATE $E \gets \emptyset$.
	\WHILE {$|E| < l$}
		\STATE $f \gets Q'.pop()$.
		\IF {$f.k < |\E|$ \textbf{or} $f.l < |E|$}
			\STATE $f.cov \gets$ \#cells in $f.cells$ that are not
			marked and $cell.k = |\E|$, $cell.l = |E|$. 
			\STATE $f.k \gets |\E|$, $f.l \gets |E|$.
			\STATE Reinsert $f$ in $Q'$ (and reheapify w.r.t.~$f.cov$).
		\ELSE 
			\STATE $E \gets E \cup \{f\}$.
			\FOR {each $cell \in f.cells$} 
				\STATE $cell.k = |\E|, cell.l = cell.l + 1$.
				\STATE \textbf{if} $|E|=1$ \textbf{then} $cell.l = 1$.
				\STATE \textbf{if} $|E|=l$ \textbf{and} $cell.l = l$
				\textbf{then} mark the cell.
			\ENDFOR
		\ENDIF
	\ENDWHILE 
\end{small}
\end{algorithmic}
\end{algorithm}
}

\note{Amit: Notes above are written by me.}

Here the greedy algorithm for mining explanations is presented. 
The algorithm consists of the subroutine of mining individual explanations, $E$, and the incremental collection of the explanations mined, $\E$. It takes parameters $k$ and $l$ that allow specification of how many explanations to identify and how many features are to to be used in each explanation respectively. The algorithm seeks to find the best explanations first directing following explanations to cover cells that remain to be explained. The coverage attributed to each explanation is the number of cells that it covers that were not covered by previous explanations. In \MineExplanations (Algorithm \ref{alg:MineExplanations_simple}) $f.cells$ tracks the cells on which feature $f$ is present and have yet to be explained. $f.cells$ is initialized with $\M^f$ and following the addition of an explanation $E$ to $\E$ for each feature the cells covered by the explanation, $E.cells$, are removed from $f.cells$.

\begin{algorithm}[ht!]
\caption{\MineExplanations} \label{alg:MineExplanations_simple}
\begin{algorithmic}[1]
\REQUIRE{$k$, $l$} \ENSURE{$\E$}
	\STATE $\E \gets \emptyset$
	\WHILE {$|\E| < k$}		
		\STATE $f^{*} \gets \argmax_{f \in \F} (|f.cells|)$
		\STATE $E \gets \NextExplanation(f^{*}, l)$
		\STATE $\E \gets \E \cup \{E\}$
		\FORALL {$f \in \F$}
			\STATE $f.cells \gets f.cells \setminus E.cells$				
		\ENDFOR
	\ENDWHILE
\end{algorithmic}
\end{algorithm}

\begin{algorithm}[ht!]
\caption{$\NextExplanation$} \label{alg:NextExplanation_simple}
\begin{algorithmic}[1]
\REQUIRE{$f^*$, $l$} \ENSURE{$E$}
	\STATE $E \gets \{f^*\}$
	\STATE $f.intersect\_cells \gets f^*.cells$
	\WHILE {$|E| < l$}
		\STATE \textbf{for all $f$ except $f^*$} set $f.intersect\_cells$ to $\emptyset$
		\FORALL {$(u,a,v) \in f^*.intersect\_cells$}		
			\FORALL {$f \in \F_{(a,v)}$}			
				\STATE $f.intersect\_cells \gets f.intersect\_cells \cup \{ (u,a,v) \}$
			\ENDFOR
		\ENDFOR	
		\STATE $f^{*} \gets \argmax_{f \in \F \setminus E} (|f.intersect\_cells|)$			
		\STATE $E \gets E \cup \{f^*\}$		
	\ENDWHILE
	\STATE $E.cells \gets f^*.intersect\_cells$
\end{algorithmic}
\end{algorithm}

The mining of individual explanations follows the greedy heuristic, adding features to the explanation that result in minimum marginal loss of coverage. To begin each explanation $f.cells$ is used to find the feature that captures the most unexplained cells. Following this, \NextExplanation (Algorithm \ref{alg:NextExplanation_simple}) is called to complete the explanation. \NextExplanation repeatedly identifies the feature whos addition results in minimum marginal loss of coverage and adds it to the explanation until the target of $l$ features is met. The best feature at each step is determined by performing a pass over the cells currently covered by the explanation, $f^*.intersect\_cells$, and for each cell adding the cell to $f.intersect\_cells$ for each feature $f$ that is present for the cell. This implicitly computes the cells in the intersect of the features selected so far, $E$, with each feature $f$. Of the features that are not in $E$ the feature that retains the most coverage, indicated by largest $f.intersect\_cells$, is added to the explanation. This best $f$ is used as $f^*$ in the next step as its $intersect\_cells$ are exactly those covered by the explanation. Once all $l$ features are selected $E$ is returned with $E.cells$ as $f^*.intersect\_cells$, all of the cells covered by the explanation.

\eat{

\begin{algorithm}[ht!]
\caption{$FindExplanation$} \label{alg:findexplanation}
\begin{algorithmic}[1]
\REQUIRE{$\M_u$, $k$, $l$, $\F$} \ENSURE{$\E$}
	\FORALL {$f \in \F$}
		\STATE $o\_mass(f) \gets 0$
		\STATE $o\_cells(f) \gets \emptyset$
	\ENDFOR	
	
	\FORALL {$(u,a,v) \in \M_u$}
		\FORALL {$f \in \F_{(a,v)}$}
			\STATE $o\_mass(f) \gets o\_mass(f) + 1$
			\STATE $o\_cells(f) \gets o\_cells(f) \cup \{ (u,a,v) \}$
		\ENDFOR
	\ENDFOR
	
	\STATE $\E \gets \emptyset$
	
	\FOR {$i = 1 \to k$}		
		\STATE $f^{*} \gets argmax_{f \in \F} (mass(f))$
		\STATE $E, Ecells) \gets FindFeatureSet(f^{*}, o\_cells(f^*), l)$
		
		\FORALL {$(u,a,v) \in Ecells$}
			\FORALL {$f \in \F_{(a,v)}$}
			
				\STATE $o\_cells(f) \gets o\_cells(f) \setminus (u,a,v)$
				\STATE $o\_mass(f) \gets o\_mass(f) - 1$
				
			\ENDFOR
		\ENDFOR
		
		\STATE $\E \gets \E \cup E$
	\ENDFOR
\end{algorithmic}
\end{algorithm}

\begin{algorithm}[ht!]
\caption{$FindFeatureSet$} \label{alg:findfeatureset}
\begin{algorithmic}[1]
\REQUIRE{$f^*$, $cells$, $l$} \ENSURE{$E$,$cells$}
	\STATE $F \gets \{f^*\}$

	\FOR {$i = 2 \to l$}
		\STATE $f^* \gets null$
		\STATE $F_{reset} \gets \emptyset$
		\FORALL {$(u,a,v) \in cells$}		
			\FORALL {$f \in \F_{(a,v)}$}
			
				\IF {$\exists f' \in F | f \cap f' = \emptyset$}
					\STATE continue
				\ENDIF
			
				\IF {$f \notin F_{reset}$}
					\STATE $cells(f) \gets \emptyset$
					\STATE $mass(f) = 0$
					\STATE $F_{reset} \gets F_{reset} \cup {f}$
				\ENDIF
				
				\STATE $mass(f) \gets mass(f) + 1$
				\STATE $cells(f) \gets cells(f) \cup \{ (u,a,v) \}$
				\STATE $E \gets E \cup \{ f \}$				
				\IF {$mass(f) > mass(f^*)$}
					\STATE $f^* \gets f$
				\ENDIF
			\ENDFOR
		\ENDFOR
		
		\STATE $E \gets E \cup \{f^*\}$
		\STATE $cells \gets cells(f^*)$
	\ENDFOR
\end{algorithmic}
\end{algorithm}
}

\eat{

The algorithm consists of two functions: $OuterLoop$ and $InnerLoop$. For each of the $l$ blocks $OuterLoop$ finds the first feature and hands it off to $InnerLoop$. $InnerLoop$ receives the starter feature, greedily determines the remaining features in the block, and then returns the result to $OuterLoop$. $OuterLoop$ then stores the completed block, and then updates feature and cell information for the next iteration. 

Throughout its execution, each function maintains two map data structures $cells$ and $mass$ to track statistics on each feature. Let $B$ be the list of features making up the current block, and let $S$ be the set of users we are concerned with. Then for a given feature $f$, the evaluation of $cells[f]$ returns a particular subset of $\M ^{B \cap f} _{S}$. $mass[f]$ stores the total mass of exactly that subset of cells in $cells[f]$ which is used for efficiently determining the next feature to add to the block. 

These maps are distinguished in the pseudocode by a subscript. If a map belongs to the outer function it is given a subscript $\M$, and a subscript $\mu$ is given to the inner function maps. The following notation convention is also used in the pseudocode. Where $A$ and $B$ are ordered lists, the term $A \wedge B$ denotes the concatenation of $A$ and $B$, rather than the logical AND, or the intersection of the two lists.

\begin{algorithm}[ht!]
\caption{$InnerLoop$ $f^{*}, cells_{\M}, S, k$}
\begin{algorithmic}[1]
	\STATE $B \gets [ \spc ]$
	\STATE $I_{B} \gets cells_{\M}[f^{*}]$
	\FOR {$i = 2 \to k$}

		\STATE $cells_{\mu} \gets Map : feature \to \{ \emptyset \}$
		\STATE $mass_{\mu} \gets Map : feature \to 0.0$

		\STATE $F \gets \{ \emptyset \}$
		\FORALL {$(a,v) \in I_{B}$}
		
			\FORALL {$f \in \F_{(a,v)}$}
			
				\IF {$ f = f^{*}$}
					\STATE continue
				\ENDIF
				\STATE $mass_{\mu}[f] \gets mass_{\mu}[f] + \M_{S}(a,v)$
				\STATE $cells_{\mu}[f] \gets cells_{\mu}[f] \cup \{ (a,v) \}$
				\STATE $F \gets F \cup \{ f \}$
				
			\ENDFOR
		\ENDFOR
		
		\STATE $f^{*} \gets argmax _{f \in F} mass_{\mu}[f]$
		\STATE $B \gets B \wedge [f^{*}]$
		\STATE $I_{B} \gets cells_{\mu}[f^{*}]$

	\ENDFOR
	
	\STATE \textbf{return} $(B, I_{B})$
\end{algorithmic}
\end{algorithm}	

\subsection {$InnerLoop$ function}

Each call to $InnerLoop$ provides two results to $OuterLoop$. The first is $B$, the list of $k$ features beginning with $f^{*}$, and the second is $I_{B}$, the cells in the intersection of all of the block's features $\tilde{\M} ^{B} _{S}$ . ($\tilde{\M}$ is $\M$, minus any cells that have already been covered by blocks prior to $B$.)

To provide these, $InnerLoop$ repeats the following until the block has $k$ features: (1) for each $f$, determines the intersection $\tilde{\M} ^{B \cap f} _{S}$, (2) from those intersections, re-calculates the mass preserved by each feature $\sigma(B \cap f, \tilde{\M_{S}})$ and finally (3) adds the feature with the highest value to the block $B$. The first feature is provided by $OuterLoop$, but the rest must be obtained as just described.

The main purpose of maintaining the maps $cells_{\mu}$ and $mass_{\mu}$ for $InnerLoop$ is to reduce time complexity - particularly in the intersection step. Without optimization, determining the intersection $\tilde{\M} ^{B \cap f} _{S}$ would require comparing every cell in $\tilde{\M} ^{f} _{S}$ pairwise against every cell in $\tilde{\M} ^{B} _{S}$, for each feature $f \notin B$. For attributes like Gender, where male and female can each select half of all the cells, the cost of an intersection operation is high: on the order of $|\tilde{\M} ^{B} _{S}|$ multiplied by the total number of nonzero cells.

$InnerLoop$ performs the intersection operation implicitly on lines 5 to 17 by rebuilding $cells_{\mu}$ at each iteration from only the cells in $\tilde{\M} ^{B} _{S}$. This means the only cells the algorithm considers are the ones that actually occupy the intersection region that we are trying to capture. 

$InnerLoop$ will also ignore any features $f$ for which  $\tilde{\M} ^{B} _{S} \cap \tilde{\M} ^{f} _{S} = \{ \emptyset \}$. This situation occurs any time there is a family $F$ of mutually exclusive features (e.g. age, gender, release date). Since no cell may have more than one $f \in F$ "present", after the first feature $f' \in F$ is included in a block, all other $f \in F \setminus f'$ have an empty intersection with $\tilde{\M} ^{B} _{S}$. They will be automatically removed from consideration for the remainder of the function call's execution.

To reduce the complexity of calculating the mass $\sigma(B \cap f, \tilde{\M_{S}})$ for each feature $f$, the map $mass_{\mu}$ is constructed during the intersection process. Just as with $cells_{\mu}$, not all features will be included in $mass_{\mu}$, which reduces the time evaluating the term $argmax _{f \in F} mass_{\mu}[f]$ for finding the next feature to add to the block.

Finally, once the block has $k$ features, it is returned to $OuterLoop$ along with $\tilde{\M} ^{B} _{S}$ (denoted by $I_{B}$ in the pseudocode). The secondary purpose of maintaining $cells_{\mu}$ in $InnerLoop$ is that it makes finding $\tilde{\M} ^{B} _{S}$ a constant time operation. It is the result of evaluating $cells_{\mu}[f^{*}]$, where $f^{*}$ is the last selected feature.

\begin{algorithm}[ht!]
\caption{$OuterLoop$ $S, k, l$}
\begin{algorithmic}[1]
	\STATE $mass_{\M} \gets Map : feature \to 0.0$
	\STATE $cells_{\M} \gets Map : feature \to \{ \emptyset \}$
	
	\FORALL {$\{(a,v) | \M_{S}(a,v) > 0, a \in A, v \in V \} $}
		\FORALL {$f \in \F_{(a,v)}$}
			\STATE $mass_{\M}[f] \gets mass_{\M}[f] + \M_{S}(a,v)$
			\STATE $cells_{\M}[f] \gets cells_{\M}[f] \cup \{ (a,v) \}$
		\ENDFOR
	\ENDFOR
	
	\STATE $\B \gets [ \spc ]$
	
	\FOR {	$i = 1 \to l$}
		
		\STATE $f^{*} \gets argmax _{f \in F} mass_{\M}[f]$
		\STATE $(B, I_{B}) \gets InnerLoop(f^{*}, cells_{\M}, S, k)$
		
		\FORALL {$(a,v) \in I_{B}$}
			\FORALL {$f \in \F_{(a,v)}$}
			
				\STATE $cells_{\M}[f] \gets cells_{\M}[f] \setminus (a,v)$
				\STATE $mass_{\M}[f] \gets mass_{\M}[f] - \M_{S}(a,v)$
				
			\ENDFOR
		\ENDFOR
		
		\STATE $\B \gets \B \wedge [B]$
	\ENDFOR
	
	\STATE \textbf{return} $\B$
\end{algorithmic}
\end{algorithm}

\subsection {$OuterLoop$ function}

$OuterLoop$'s functions are to provide $InnerLoop$ with the correct starting feature for each block and to collect the completed blocks that $InnerLoop$ responds with. 

In order to determine the feature that will begin the next block, $OuterLoop$ uses $\sigma(f, \tilde{\M_{S}})$, stored in $mass_{\M}$. This map, along with $cells_{\M}$, must be updated any time a new block $B$ is selected to reflect the fact that the mass covered by $B$ no longer matters to future blocks. The update consists of removing all references to the cells selected by $B$ as well as any credit mass they contribute to $\sigma(f, \tilde{\M_{S}})$ for any $f$.

To improve the efficiency of this update, instead of re-evaluating $mass_{\M}[f]$ and $cells_{\M}[f]$ for each feature $f$, only a subset of the entries in each map are modified. $OuterLoop$ determines which entries require modification by iterating over the cells in $I_{B}$ (the cells coverd by $B$), since these are the only cells that are changed by the addition of $B$ to $\B$.

The final step, once all $l$ blocks have been collected, is to return $\B$.

}

}


\section{Experiments}
\label{sec:exp}

The goals of our experimental analysis are manifold. Not only we are interested
in identifying influential users, that is, users with high number of followups,
we are also interested in exploring the distribution of their influence, from both
quantitative and qualitative angles. We achieve this by performing an exhaustive
analysis on two real-world datasets -- Flixster and Twitter. We next describe the datasets.

\subsection{Datasets}

\noindent
\textbf{Flixster.}
Flixster (\url{www.flixster.com}) is a major player in the mobile and
social movie rating business. Originally collected by Jamali et
al.~\cite{Jamali10}, the dataset contains 1M users and 7.06M ratings,
distributed across 49K movies. Out of these 1M users, 148K users have provided 
one or more ratings and they have 2.43M edges among them.
Here, an action is a user
rating a movie. 
For each rating, the dataset also includes the
timestamp at which the user rated the movie. 


\textit{User Features:} There are two user attributes in the dataset:
Gender and Age. 
As is done 
in other public datasets such as Movielens, we bin age values  into 7 age
ranges as follows: less than 18, 18--24, 25--34, 35--44, 45--49, 50--55
and 56+. Thus, we ended up with 9 binary user features.

\textit{Action Features:} 
To enrich action (movie) features, we queried IMDB API
(\url{imdbapi.com}) with the movie titles present in the
Flixster data set: 82\% of the movie titles found matches; we ignore
ratings on the remaining 18\% of the unmatched movies, which constituted
9\% of the ratings.
\eat{
\note{Laks: How many or what \% ratings did we lose as a result?} 
}
As shown in Table
\ref{tab:flixster}, 7 attributes were collected.
Two of them -- Rating
and Year are numerical attributes, and we bin them into 3 ranges. For
instance, Rating is classified into three ranges: 1--6,
6--7 and 7--10. Similarly, attribute Year is classified into 3 bins: less
than 1998, 1998--2002 and 2002+. Other attributes include Genre, Maturity
Rating, Director, Actor and Writer. In total, we ended up with 7135
predicates.
In general, we perform the binning in a roughly equi-depth manner, that
is, equal number of (global) followups fall in each bin. The reason for
following such a binning strategy is to remove any prior bias on
selecting predicates.

\begin{table}
	\centering
	\caption {Flixster: Movie (Action) \GB{Predicates/Features}. We collected 7
	attributes. }
	\label{tab:flixster}	
	\begin{scriptsize}
	\begin{tabular}{ | l | l |}    \hline	
		Attribute 	& \# of Predicates/Features \\ \hline \hline			
		Rating 	& 	3 \{1-6, 6-7, 7-10\}	  	\\ \hline
		Year 		& 	3  \{$<$1998, 1998-2002, 2002+\}	\\ \hline
		Genre 		& 	28  \{ex. Comedy\}	\\ \hline		
		Maturity Rating 		& 	94 \{ex. PG-13\} 		\\ \hline
		Director 		& 1239  		\\ \hline
		Actor 		& 	4364 		\\ \hline
		Writer 		& 	1404 		\\ \hline
	\end{tabular}
	\end{scriptsize}
\end{table}

\smallskip
\noindent
\textbf{Twitter.}
Twitter (\url{twitter.com}) is a well known microblogging site where
users post tweets, messages of up to 140 characters, that are broadcast
to users following them. Tweets can be retweeted by receiving users;
this rebroadcasts the tweet to users following the receiver. Thus,
action here is a user posting a tweet (or retweet). 

While collecting data, we focus on tweets that are retweeted, as
retweets are definite indications of flow of influence (or information). 
That is, tweeting is an action and retweeting is evidence of its 
propagation through the network.
Moreover, we restricted our data collection to tweets containing URLs,
as it allows us to compile rich action features, from the webpages
corresponding to the mentioned URLs.

We collected the tweet data using Twitter Streaming API, for 3 weeks
from Tue Jul 24 14:50:07 PDT 2012 to Tue Aug 14 14:57:30 PDT 2012. The
Streaming API permits tracking of specific users, that is, using this
API, we can collect tweets created by these users, and any retweets of
these tweets.  To select these ``source'' users, we exploit the Twitter
Search API. We did not provide any search term in the query, and the API
returned top-20 tweets according to Twitter's internal ranking. Queries
were sent every 5 seconds until a total of 10K source users were
collected. Once we had source users, we collected 
tweet data from \GB{the}
Streaming API, targeting these 10K users. In 3 weeks of data collection,
we accumulated 2.2M (source) tweets from these source users, which were retweeted
92.5M times by 11.8M other users. 

\eat{
\note{Laks: What are the edges? Do they correspond who retweeted whom? Sth else? } 
}

\smallskip
\noindent
\textit{Action Features:}
Tweets may contain user mentions, hashtags and URLs as features.
Usually, URLs are shortened by services like \url{bit.ly}. 
Out of the 2.2M source tweets, 51\%
contained URLs. We focused on these tweets and their retweets. We were
able to expand 98\% of the URLs. To collect features, we 
queried
Delicious (\url{http://delicious.com/}) with URL hostnames and gathered Delicious tags. 
A total of 39K
unique URL tags were found from 28K unique URL hosts. After this
processing, we had 948K source tweets, which received 12.8M retweets. We
consider this sample in our analysis.

\smallskip 
\noindent 
\textbf{Influence Cube and Frequency Distribution of Followups.} From this data, we construct
influence cube $\C$ as described in \textsection\ref{sec:probdef}. That is, the
value in cell $(u,a,v)$ is set to 1 if there is a path from $u$ to $v$ in
the propagation graph corresponding to action $a$. For each user $u$, we then calculate 
the total number of followups, the distribution of which
is presented in Fig.~\ref{fig:dist_followups} (note that both the axes are in log scale). 
As expected, the distribution follows
a power law  in case of Flixster, with 
the power law exponent of  -1.326. 
On the other hand, the distribution in case of Twitter is not exactly power law. This is 
due to the bias in our data collection strategy that favors active users: recall that 
we collected 10K source users by exploiting Twitter Search API. We fit two 
piecewise functions -- the first with exponent -0.462 and the latter with exponent -1.04.

\begin{figure}[t]
\centering
\begin{tabular}{cc}
\hspace{-4mm}\includegraphics[width=4.5cm]{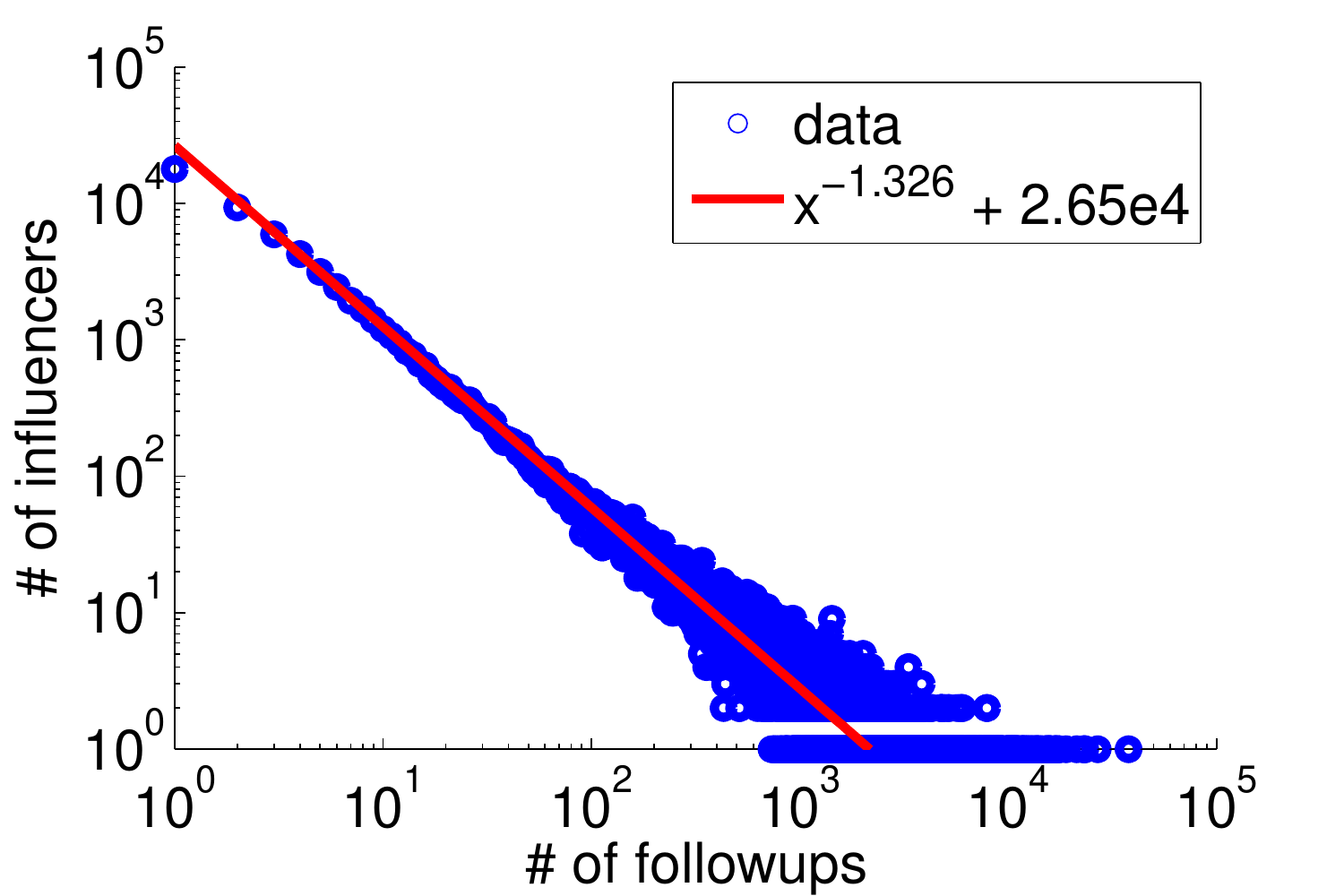}&
\hspace{-4mm}\includegraphics[width=4.5cm]{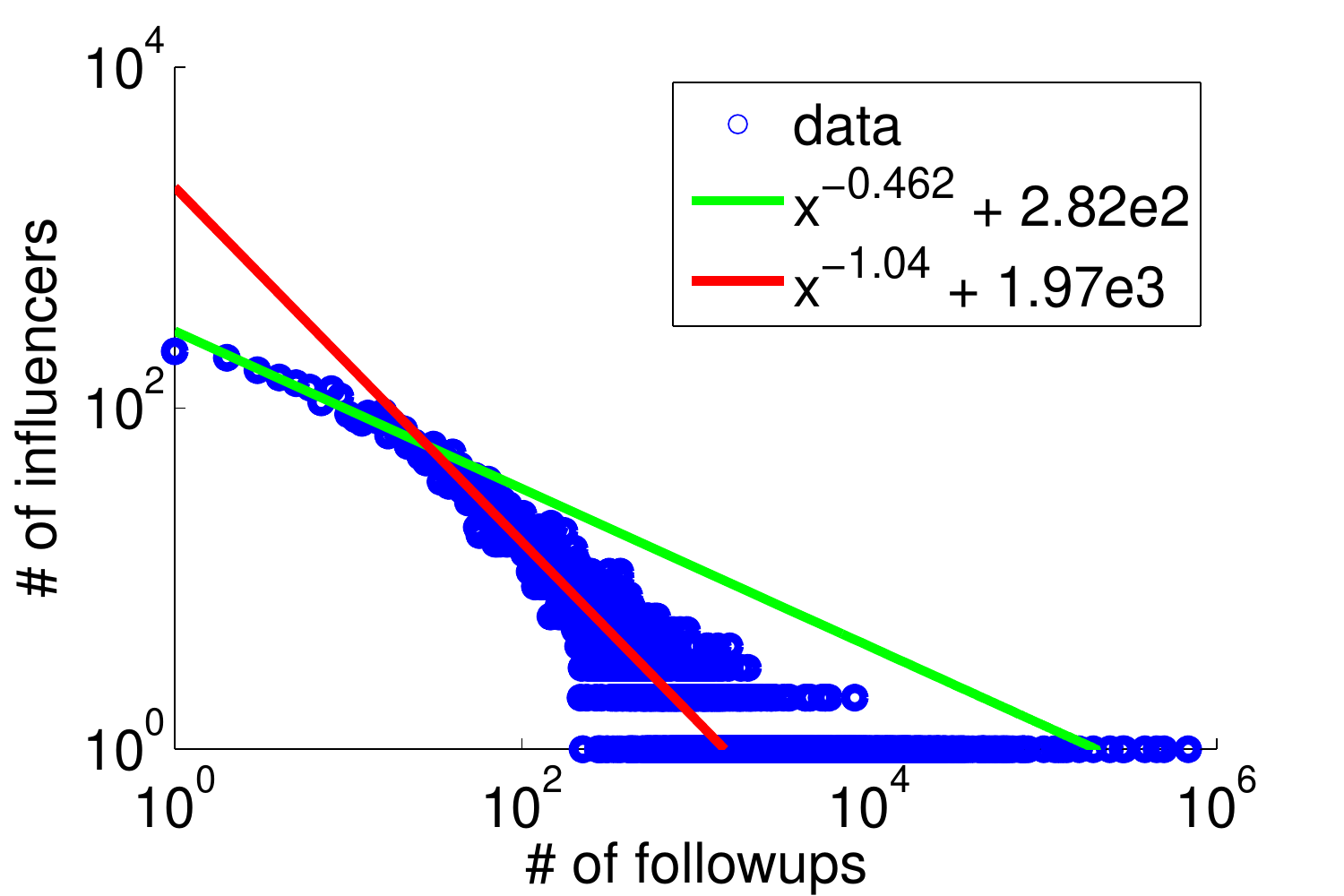}
\end{tabular}
\caption{Frequency distribution of number of followups in Flixster 
(left) and Twitter (right). Both axes are in log scale.\label{fig:dist_followups}} 
\end{figure}

\subsection{Qualitative Analysis}

Through qualitative analysis, we  mainly seek to validate our problem
settings, approach and algorithm. But we note the limitations imposed by
the public datasets available. In Twitter dataset, we know the identity
of the influencers. We can thus validate our approach by checking if the
distribution of influence is along the expected lines. On the other
hand, in Flixster dataset, we do not know the identity of the
influencers, but have the access to both user and action attributes.
Thus, in this case, we can examine the benefits of incorporating
followers' demographics (as we saw in the example shown in Table
\ref{tab:flixster_firstUser}).


\begin{table*}[t]
\parbox{.48\linewidth}{

	\centering
	\caption{Kali: User with 2nd most followups. $k=6$, $k=3$.}
	\begin{scriptsize}
	\begin{tabu}{c c c | c c c }   	
		\multicolumn{3}{c|}{} 	& Actions	& Followers & Followups\\	
		\multicolumn{3}{c|}{}   & (2.8k) & (93)  & (27.0k)\\ \hline
		\multirow{4}{*}{female}	& comedy & pre-1997 & 550 & 62 & 2.8k	\\ \tabucline[2pt]{2-6}
		& thriller & action & 549 & 62 & 3.1k	\\ \tabucline[2pt]{2-6}
		& \multirow{2}{*}{drama} & len:long	& 475 & 62 & 2.8k\\ \tabucline[1pt]{3-6}
		& & len:med& 483 & 62 & 2.1k \\ \tabucline[3pt]{1-6}
		\multirow{2}{*}{male} & thriller &	rated:R & 687 & 28 &  2.6k\\ \tabucline[2pt]{2-6}
		& age:25-34 & pre-1997 & 1.4k & 13 & 2.3k \\ \hline	
		\multicolumn{5}{r}{Total Coverage:} & 51.4\%\\
	\end{tabu}
	
	\label{tab:flixster_secondUser}
	\end{scriptsize}

}
\parbox{.48\linewidth}{
	\centering
	\caption{Julie: User with 3rd most followups. $k=6$, $l=3$.}
	\begin{scriptsize}
	\begin{tabu}{c c c | c c c }   	
		\multicolumn{3}{c|}{} 	&Actions&Followers&Followups\\	
		\multicolumn{3}{c|}{}   & (1.8k) & (73) &(23.5k)\\ \hline
		\multirow{6}{*}{female}	& \multirow{2}{*}{pre-1997} & comedy  & 401 & 63 & 4.3k	\\ \tabucline[1pt]{3-6}
		& & rat:7-10& 317 & 63	& 3.9k\\ \tabucline[2pt]{2-6}
		& \multirow{2}{*}{1998-2002} & comedy	 & 238 & 63 & 3.2k\\ \tabucline[1pt]{3-6}
		& & drama  & 222 & 63& 2.3k \\ \tabucline[2pt]{2-6}
		& rated:R & thriller  & 446 & 63 &  3.7k\\ \tabucline[2pt]{2-6}
		& PG-13 & action  & 228 & 63 & 3.0k\\ \hline	
		\multicolumn{5}{r}{Total Coverage:} & 68.7\%\\
	\end{tabu}
	\label{tab:flixster_thirdUser}
	\end{scriptsize}

}
\end{table*}

\eat{
\begin{table}

\end{table}
}

\noindent
\textbf{Formatting the Explanations.}
To avoid clutter, in the explanation tables, we do not mention the name
of each attribute, and instead show its required value directly.  For
instance, consider the example in Table \ref{tab:flixster_firstUser}.
Here, ``rated R'' indicates the feature ``maturity rating = rating R'' and ``thriller''
implies ``genre = thriller''. Similarly, we combine overlapping features
(or predicates) from various explanations, to allow us to visualize the
explanations in a tree structure. As an example, the features ``rated
R'' and ``thriller'' are present in the first two explanations. We order
the features in a manner that minimizes the repetition of the features
in the explanations table.

\noindent
\textbf{Flixster.} 
%
We take the top-3 influencers measured in terms of the number of
followups they recieve,  and apply our algorithm to generate
explanations, for examining their influence distribution. The results
are presented in  Tables \ref{tab:flixster_firstUser},
\ref{tab:flixster_secondUser} and \ref{tab:flixster_thirdUser}. For
simplicity, we refer to the top-3 users as Mike, Kali, and Julie, even
though their identities in Flixster are unknown.  Table
\ref{tab:flixster} shows the complete set of movie features. For users,
we have the features corresponding to attributes gender and age. The
intent should be clear from the context. 

%

While Table \ref{tab:flixster_firstUser} shows the explanations for the
most influential user (please see \textsection\ref{sec:intro} for more
details about this user), 
Tables \ref{tab:flixster_secondUser} and \ref{tab:flixster_thirdUser} show the
explanations for users who received second and third most number of followups, 
Kali and Julie, 
respectively. Kali has rated 2.8K movies and
received 27K followups from her 93 active followers. Julie has rated 1.8K movies
and received 23.5K followups from her 73 active followers. While Kali's
explanations cover 51.4\% of her followups, Julie's explanations cover 
68.7\% of her followups. 

Julie in particular
is influential on female users, with all her followups in the explanations
coming from female users. 
In fact, 63 among
73 of her followers are females. Finally, she is influential on all sorts of
movies (on female users), ranging from comedy, drama, thiller and action. This
implies that {\sl Julie might be a very good seed if the target market is females},
perhaps better than the top-2 users whose influence is distributed among both
males and females. This is the sort of insight that simply cannot be gained by 
viewing network values solely as a scalar!
A final remark about the explanations found is that they are heterogeneous, in that 
they involve a mix of user and action features.

\noindent
\textbf{Twitter.} 
We next analyze the results from Twitter dataset. Recall that we have
user identities of key influencers in Twitter, which allows us to
validate whether the topics on which these influencers reported to be
influential by our algorithm are along the expected lines. This provides
us a nice strategy to validate our problem settings \GB{and approach}.
For instance, we expect  news accounts like New York Times (NYTimes) and
CNN to be influential on  topics like news, politics, media, etc, and
individuals like Tim O'Reilly to be influential on news on software,
tech, programming etc. Fortunately, we were able to generate a rich set
of topics for tweets by expanding their mentioned URLs. 

While we have the identity of the key influencers, for followers, 
we could not collect user
attributes due to demographic data not being available through the API.
Thus, our analysis is restricted to the action
attributes, which consist of tweet topics. We focus on explanations of four
influencers (Twitter accounts) -- New York Times, National Geographic, CNN Breaking News and
Tim O'Reilly, the results of which are presented in Tables \ref{tab:nytimes},
\ref{tab:cnn}, \ref{tab:timreilly} and \ref{tab:nationalgeo}.

\eat{
\begin{table}

	\centering
	\caption{New York Times. $k=6$, $l=4$.}
	\begin{scriptsize}
	\begin{tabu}{c c c c | c c }    
		\multicolumn{4}{c|}{} 	& Followups & Actions	\\ 
		\multicolumn{4}{c|}{} 	&  (246) & (12.5k)\\ \hline	
		\multirow{6}{*}{news}	& \multirow{4}{*}{nytimes} & \multirow{2}{*}{business} & finance & 32	& 745\\ \tabucline[1pt]{4-6}
		& & & media  & 8 & 578	\\ \tabucline[2pt]{3-6}
		& & culture & media	 & 8 & 535 \\ \tabucline[2pt]{3-6}
		& &	journalism & photos  & 14 & 762\\ \tabucline[3pt]{2-6}
		& \multirow{2}{*}{politics} &	media & newspaper	 & 79 & 8.5k\\ \tabucline[2pt]{3-6}
		& & nyt & journalism  & 43 & 4.2k\\ \hline	
		\multicolumn{5}{r}{Total Coverage:} & 82.8\%\\
	\end{tabu}
	\label{tab:nytimes}
	\end{scriptsize}

	\centering
	\caption{CNN breaking news. $k=6$, $l=3$.}
	\begin{scriptsize}
	\begin{tabu}{c c c | c c }    
		\multicolumn{3}{c|}{} 	& Actions	& Followups\\ 
		\multicolumn{3}{c|}{}   & (390) & (56.2k)\\ \hline
		\multirow{6}{*}{news}	& \multirow{4}{*}{cnn} & politics  & 295 & 39.8k	\\ \tabucline[1pt]{3-6}
		& & business & 10 & 1.9k	\\ \tabucline[1pt]{3-6}
		& & breaking-news	 & 75 & 13.3k\\ \tabucline[2pt]{2-6}
		& media & tv  & 288 & 38.5k\\ \tabucline[2pt]{2-6}
		& religion & chistianity	 & 1 & 434\\ \tabucline[2pt]{2-6}
		& magazine & sport & 2 & 265 \\ \hline	
		\multicolumn{4}{r}{Total Coverage:} & 99.8\%\\
	\end{tabu}
	\label{tab:cnn}
	\end{scriptsize}

	\centering
	\caption{Tim O'Reilly. $k=6$, $l=3$.}
	\begin{scriptsize}
	\begin{tabu}{c c c | c c }    
		\multicolumn{3}{c|}{} 	& Actions	& Followups\\ 
		\multicolumn{3}{c|}{} 	& (115) & (3.0k) \\ \hline	
		\multirow{4}{*}{news}	& \multirow{2}{*}{tech} & software & 11	& 410 \\ \tabucline[1pt]{3-6}
		& & business  & 6	& 165\\ \tabucline[2pt]{2-6}
		& media & politics	 & 15 & 513\\ \tabucline[2pt]{2-6}
		& magazine & science & 3 & 113 \\ \tabucline[3pt]{1-6}
		development & programming & opensource	 & 2 & 471\\ \tabucline[3pt]{1-6}
		socialmedia & google & ping.fm  & 5 & 235\\ \hline	
		\multicolumn{4}{r}{Total Coverage:} & 63.4\%\\
	\end{tabu}
	\label{tab:timreilly}
	\end{scriptsize}

	\centering
	\caption{National Geographic. $k=3$, $l=3$.}
	\begin{scriptsize}
	\begin{tabu}{c c c | c c }    
		\multicolumn{3}{c|}{} 	& Actions	& Followups\\
		\multicolumn{3}{c|}{}   & (262) & (23.6k)\\ \hline
		science &	nature & geography  & 116 & 12.8k \\ \tabucline[2pt]{1-5}
		\multirow{2}{*}{travel} &	\multirow{2}{*}{national geographic} & magazine	 & 51 & 4.7k\\ \tabucline[1pt]{3-5}
		& & videos  & 38 & 2.3k\\ \hline	
		\multicolumn{4}{r}{Total Coverage:} & 84.1\%\\
	\end{tabu}
	\label{tab:nationalgeo}
	\end{scriptsize}

\end{table}
}

\begin{table*}[t]
\parbox{.5\linewidth}{

}
\parbox{.5\linewidth}{

}
\end{table*}

\begin{table*}[t]
\parbox{.5\linewidth}{

}
\parbox{.5\linewidth}{

}
\end{table*}

Consider the news accounts NYTimes and CNN first. As we expect, both
these accounts are influential on topics ``news'' and ``politics''. Moreover,
CNN is quite influential on topics like ``tv'' and ``breaking news'', which do
not appear in explanations of NYTimes. This makes sense as CNN is a
television news channel, while NYTimes is a newspaper. Another interesting
observation is that topics like ``religion'' and ``christianity'' appear in CNN
explanations (but not in NYTimes explanations) indicating CNN 
airs programs about religion. In the
sample we collected, CNN tweeted about religion and christianity only once, and
received 434 retweets -- much higher than the average of 56200/390 = 144
retweets per CNN tweet. Similarly, topics ``journalism'' and ``photos'' can be 
found in NYTimes explanations but not in CNN ones, while the topics ``business'' 
and ``politics'' can be found in both. 
Finally, it is interesting to note that these explanations
are able to cover almost all the followups -- 82.8\% for NYTimes and 99.8\% for
CNN, suggesting that these accounts are followed mostly because of their news,
politics, media etc, i.e., the topics represented in the explanations shown in 
these tables.

Next, in Table \ref{tab:timreilly}, we show the explanations of influence of Tim
O'Reilly, the founder of O'Reilly Media and a supporter of the free software and
open source movements. Topics like ``news, tech, media,
software, open source, programming, development, google'' etc.~emerge as the
topics of his influence, which agrees with our expectation. 
Finally, we explore the influence of the National
Geographic Channel in Table \ref{tab:nationalgeo}. This account is influential
on ``science, nature, geography travel'' etc, again consistent with our 
expectations.

Above, we have seen that our algorithm outputs the features (topics) that we
expect these well known accounts to be influential on. These observations
clearly indicate that our problem settings and framework are valid and effective 
for the purpose of digging deep
into the influence spread of influencers and providing explanations.
\emph{When coupled with user features, as in the 
case of Flixster dataset, we are able to answer the questions we raised
in \textsection\ref{sec:intro}:
Where exactly does the influence of an influencer lie? How is it distributed? On
what type of actions is an influencer influential? What are the demographics of
its followers?}

\subsection{Quantitative Analysis}

We next focus on evaluating our algorithm from a quantitative perspective 
and compare our algorithm with other algorithms, in
terms of the coverage achieved (the fraction of total followups), running time, 
and memory usage. 

\noindent
\textbf{Algorithms Compared:} We compare our algorithm, which we refer to as 
\greedy, with the following baselines. 

\textsc{Random}: It selects the features randomly, with probability proportional to 
number of followups covered by each feature. 

\textsc{Most-Popular}: It orders the features by their popularity, i.e.,
number of followups they cover. Then, it picks the top $l$ features 
\GB{that
have yet to be picked 
to build an explanation, this is repeated $k$
times. }
It is an intuitive algorithm, as the features which cover most
followups can be seen as the representative set of features on which the
given influencer is influential. 

\eat{
\note{This doesn't say how the $kl$ features are packed off into 
$k$ explanations. The packaging could be good or bad in term sof coverage 
achieved.} 
}

\exhaustive: It generates one explanation at a time, by exhaustively trying all
possible combinations of features and picking the one that covers the maximum
number of followups (which are not covered by previous explanations). Note that
the number of possible combinations is \GB{${|\calP| \choose l}$}, where $l$ is the
number of features in one explanation. Thus, each explanation is an optimal one, 
i.e., the one with the maximum marginal coverage w.r.t. the previous set of 
explanations chosen. 
Since the objective function \GB{$\cov: 2^{2^{\calP}} \rightarrow \mathbb{R}$}
is monotone non-decreasing and submodular
(see Thm.~\ref{thm:sm1}), 
the set of explanations obtained using this algorithm is an $(1-1/e)$-approximation 
to the optimal solution \cite{submodular}.

Thus, among the algorithms compared, algorithm \exhaustive provides the upper bound on 
the number of
followups that can be possibly covered by the explanations generated. Because of its 
exhaustive nature, we expect the algorithm to be quite slow. 


Unless otherwise stated, on each dataset, we take top-100 influencers
with respect to number of followups they received. The algorithms are
then run on all of 100 influencers, and the median is picked as the
representative value for the comparison. We use median instead of mean,
as it is 
more robust against outliers.
 


\noindent
\textbf{Coverage w.r.t.~change in \textit{k}}: Figure~\ref{fig:cov_k} shows the
variation in relative coverage achieved when $k$ is varied. Recall that $k$
denotes the number of explanations (table rows) generated. Relative
coverage is defined as the fraction of followups that are covered. The parameter
$l$ is fixed to 3 in Flixster and 5 in Twitter. As expected, the (relative)
coverage increases with $k$, but not at the same rate for all algorithms. 
Our algorithm \greedy consistently
performs just as well as
\exhaustive, while beating both \random and \mostpopular by huge margins, on
both the datasets. In fact, the performance of \greedy is almost indistinguishable from that of 
\exhaustive. For instance, on Flixster, with just 6 explanations, \greedy
is able to cover 0.57 fraction of followups, compared to the fraction 0.58 
achieved by \exhaustive. On the other hand, \mostpopular covers 0.25
fraction of followups and \random performs dismally, covering only 0.03
fraction of followups. Moreover, it is worth mentioning that the coverage
achieved quickly saturates (on Flickr only) for both \mostpopular and \random, 
implying that increasing
$k$ would not have helped achieve better coverage from these algorithms.

In case of Twitter, as we can observe, the coverage achieved, is in general higher, with
\greedy covering up to 0.95 fraction of followups, again with 6 explanations. 
Recall, the longer an explanation (higher $l$) the smaller the coverage, in general. 
Despite this, the coverage achieved on Twitter with 6 longer explanations ($l=5$) 
is more than achieved on Flixster with 6 shorter explanations ($l=3$). 
This indicates that the influencers in Twitter are followed due to their niche.  
For
example, news accounts like CNN and New York Times are mostly followed on 
topics like ``news'' and ``politics'' as we saw above. Once again, \greedy and 
\exhaustive significantly outperform other baselines while their performance is 
very close. The relative coverage of \random seems to grow sharply as $k$ increases 
but at $k = 6$ it still performs poorly. The challenge is to cover as much as 
possible with as few but as detailed explanations as possible, and \greedy 
is found to rise to this challenge.

\begin{figure}[t]
\vspace{-2mm} \centering
\begin{tabular}{cc}
\hspace{-4mm}\includegraphics[width=4.5cm]{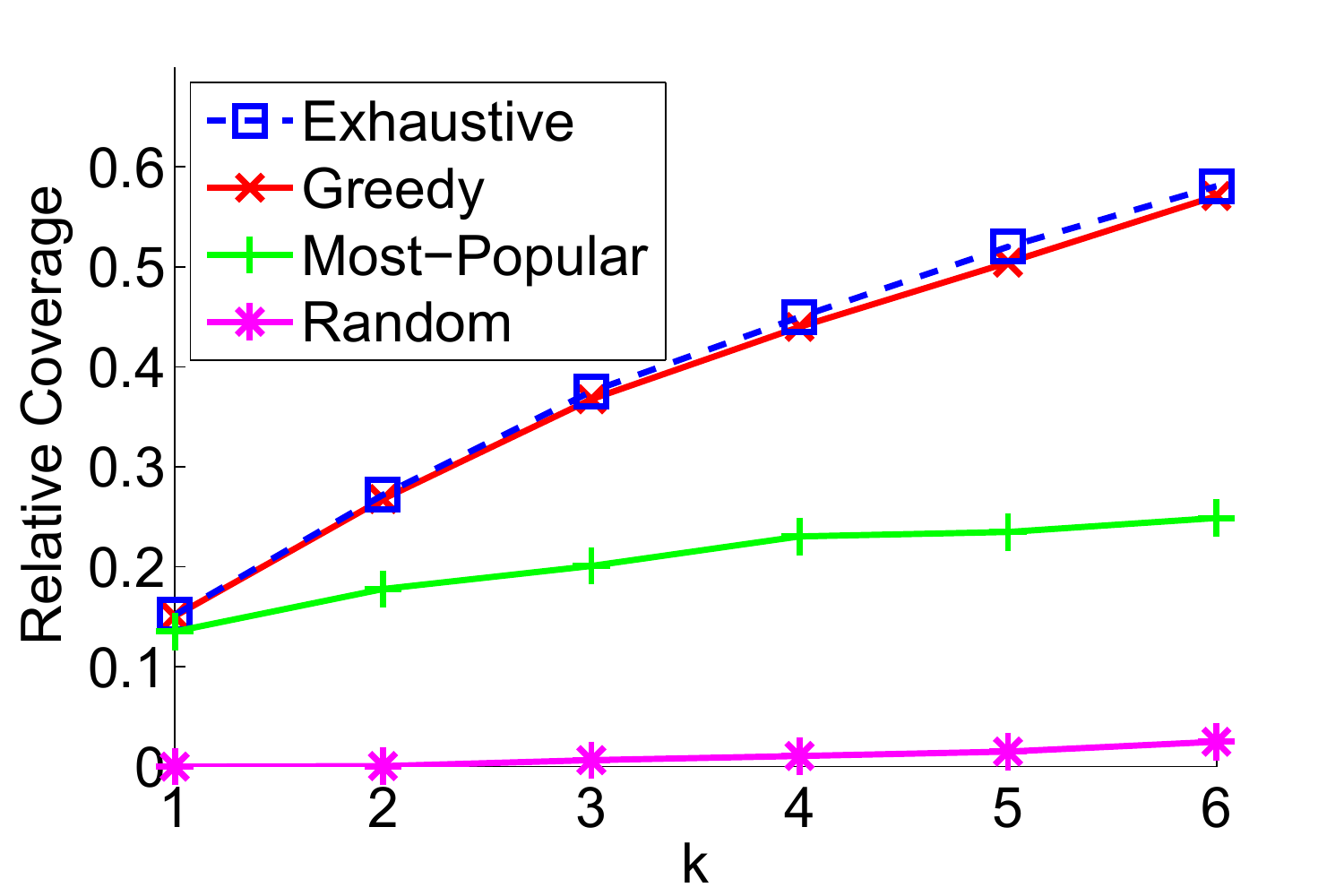}&
\hspace{-4mm}\includegraphics[width=4.5cm]{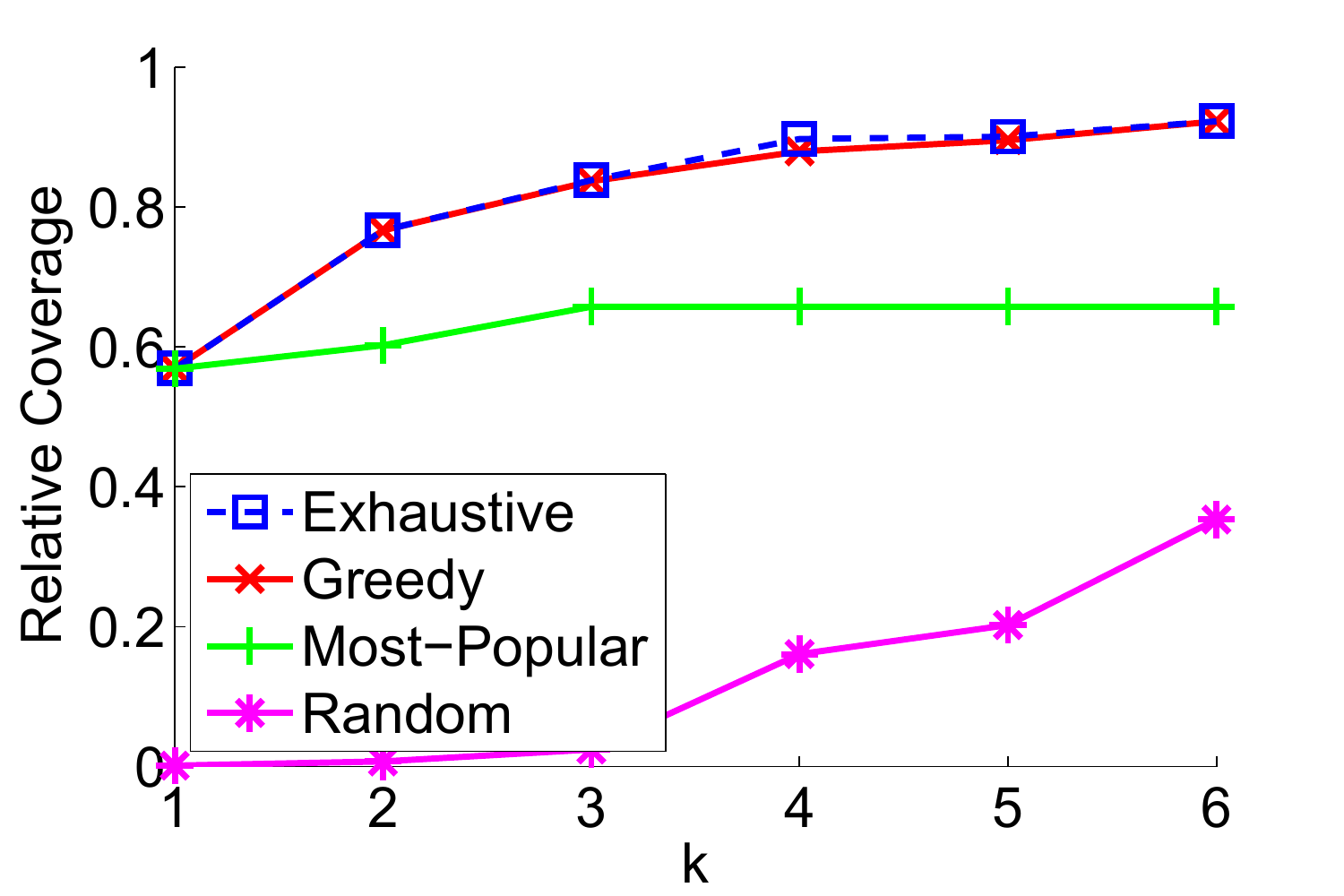}
\end{tabular}
\vspace{-4mm}\caption{Coverage achieved from various algorithms w.r.t.~variation in $k$ in Flixster (left) and Twitter (right). 
$l=3$ in Flixster and $l=5$ in Twitter.\label{fig:cov_k}} \vspace{-2mm}
\end{figure}
\begin{figure}[t]
\vspace{-2mm} \centering
\begin{tabular}{cc}
\hspace{-4mm}\includegraphics[width=4.5cm]{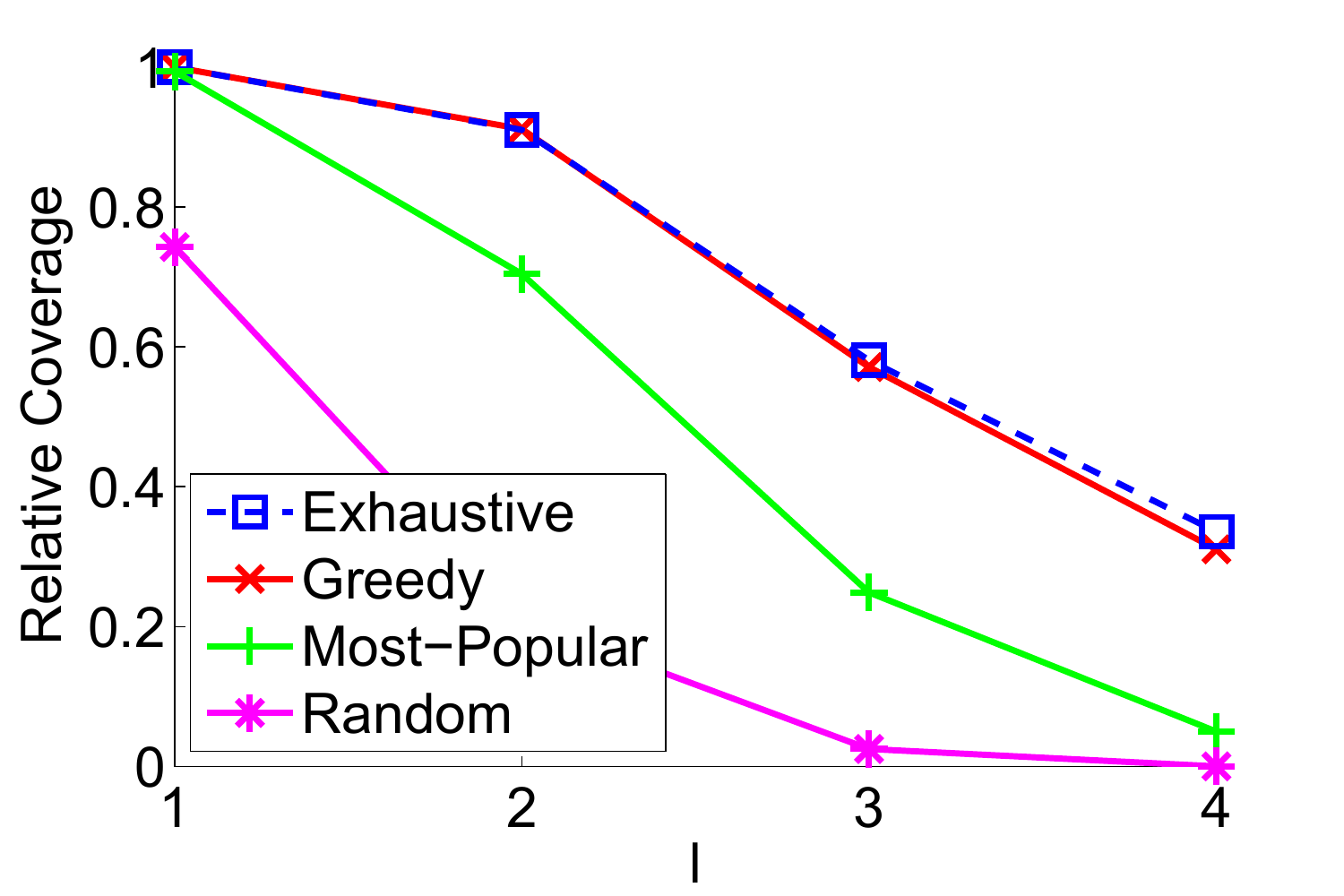}&
\hspace{-4mm}\includegraphics[width=4.5cm]{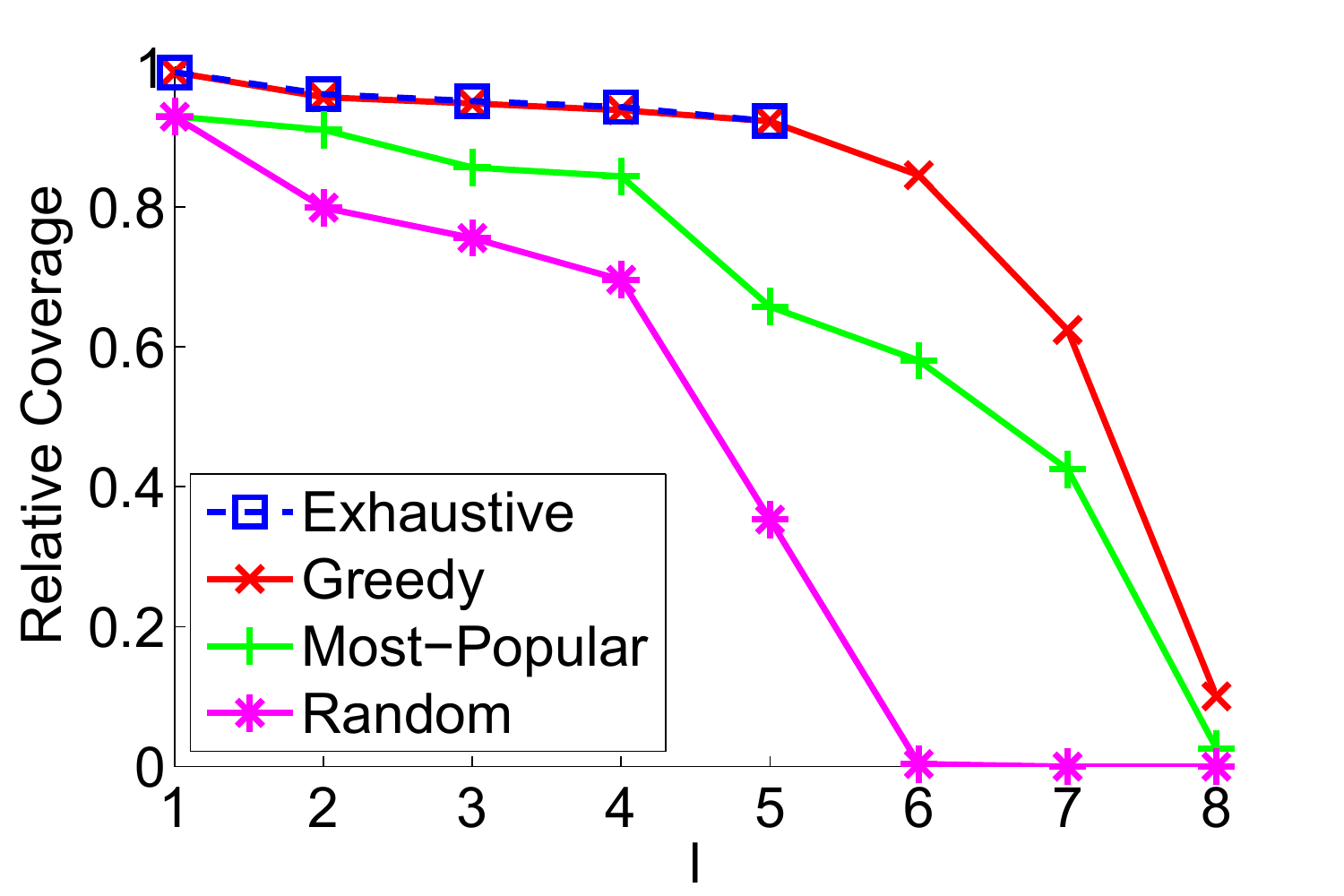}
\end{tabular}
\vspace{-4mm}\caption{Coverage achieved from various algorithms w.r.t.~variation in $l$ in Flixster (left) and Twitter (right). 
$k$ is fixed to 6.\label{fig:cov_l}} \vspace{-2mm}
\end{figure}

\noindent
\textbf{Coverage w.r.t.~change in \textit{l}}:
In Figure \ref{fig:cov_l}, we show the variation in relative coverage
when the parameter $l$, the number of features (table columns) per
explanation, is changed.  As expected,  coverage decreases with the
increase in $l$. Our \greedy algorithm continues to perform quite well.
For instance, on Flixster, it covers 0.31 fraction of followups,
compared to 0.05 and 0.00, the coverage achieved by \mostpopular and
\random, respectively at $l=4$. \exhaustive on the other hand, covers
0.34 fraction of followups.  We see a similar pattern in Twitter
dataset as well. When $l=5$, \greedy covers 0.92 fraction of followups
(same as 0.92 by \exhaustive), while the coverage from \mostpopular
and \random is 0.66 and 0.35. Notice that \exhaustive took too
long to complete for $l > 5$ on Twitter.

\noindent
\textbf{Running Times and Memory Usage.}
Fig.~\ref{fig:running_time_k} 
shows the
running time of various algorithms, on both datasets.
As can be seen, our \greedy algorithm is an order of magnitude faster
than the optimal \exhaustive algorithm. For instance, on Flixster, when
$k=6$ and $l = 3$, while \greedy takes 26 ms to finish, \exhaustive
finishes in 3,748 ms, that is, it takes 144 times longer than \greedy.
The other algorithms -- \mostpopular and \random are faster than \greedy
as we foresaw earlier. They complete in 5 ms and 3 ms.
Similarly, on Twitter, when $k=6$ and $l = 5$, \exhaustive (finishes in 597 ms) is 18 times
slower than \greedy (finishes in 32.5 ms).  On the other hand,
\mostpopular and \random take just 15 ms and 27 ms, respectively.


All the algorithms consume approximately the same amount of memory, up
to a maximum of 446 MB and 2.92 GB on Flixster and Twitter, respectivly.
This is because the memory usage primarily depends on the number of
features and the number of followups they cover. To be precise, \greedy
incurs additional space overhead on account of maintaining heaps of
features, but this additional overhead is under 1 MB, which is
negligible.

\noindent
\emph{In sum, our \greedy algorithm performs essentially as well as \exhaustive
algorithm in terms of coverage achieved, while being much more efficient in
running time.}

\begin{figure}
\vspace{-2mm} \centering
\begin{tabular}{cc}
\hspace{-4mm}\includegraphics[width=4.5cm]{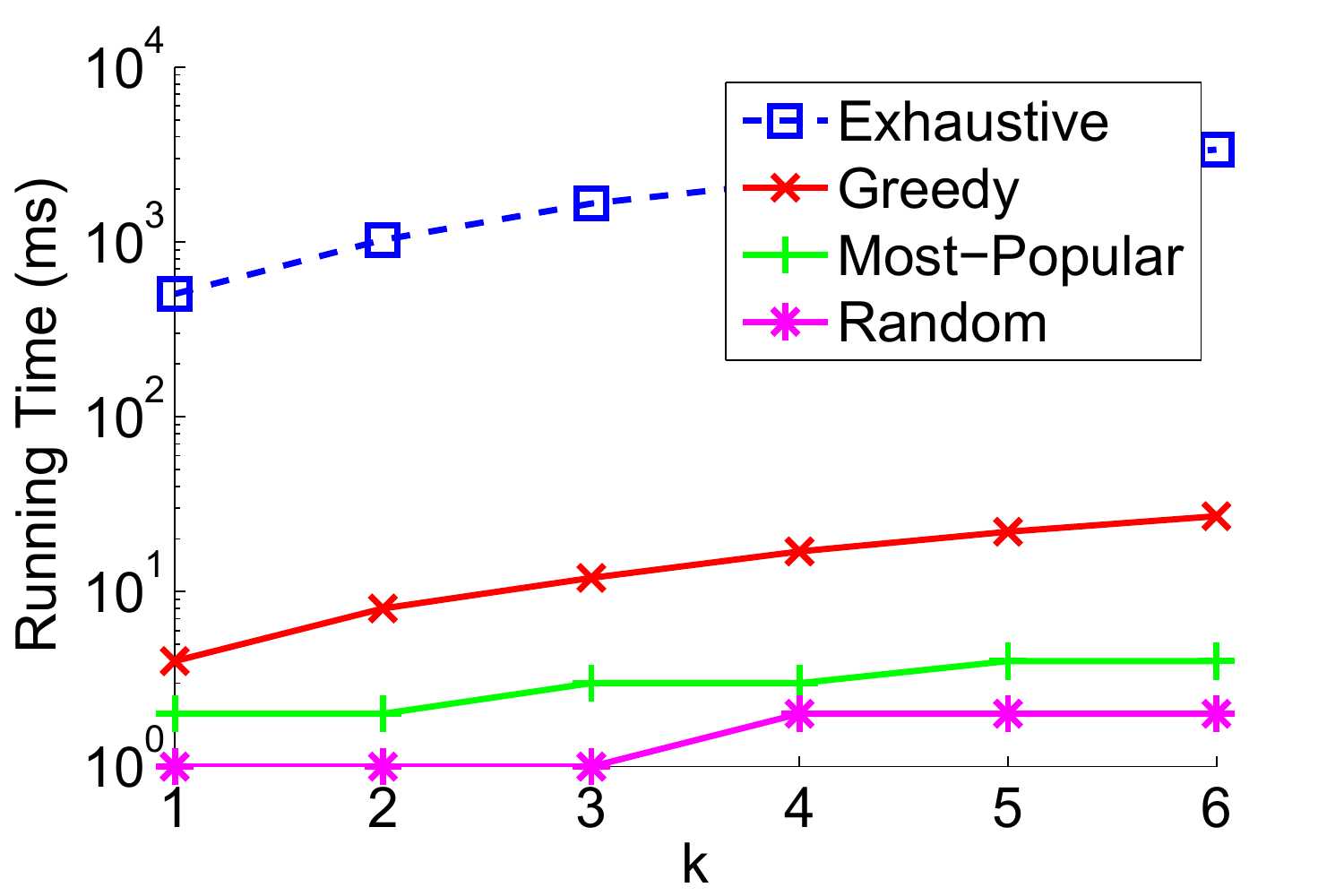}&
\hspace{-4mm}\includegraphics[width=4.5cm]{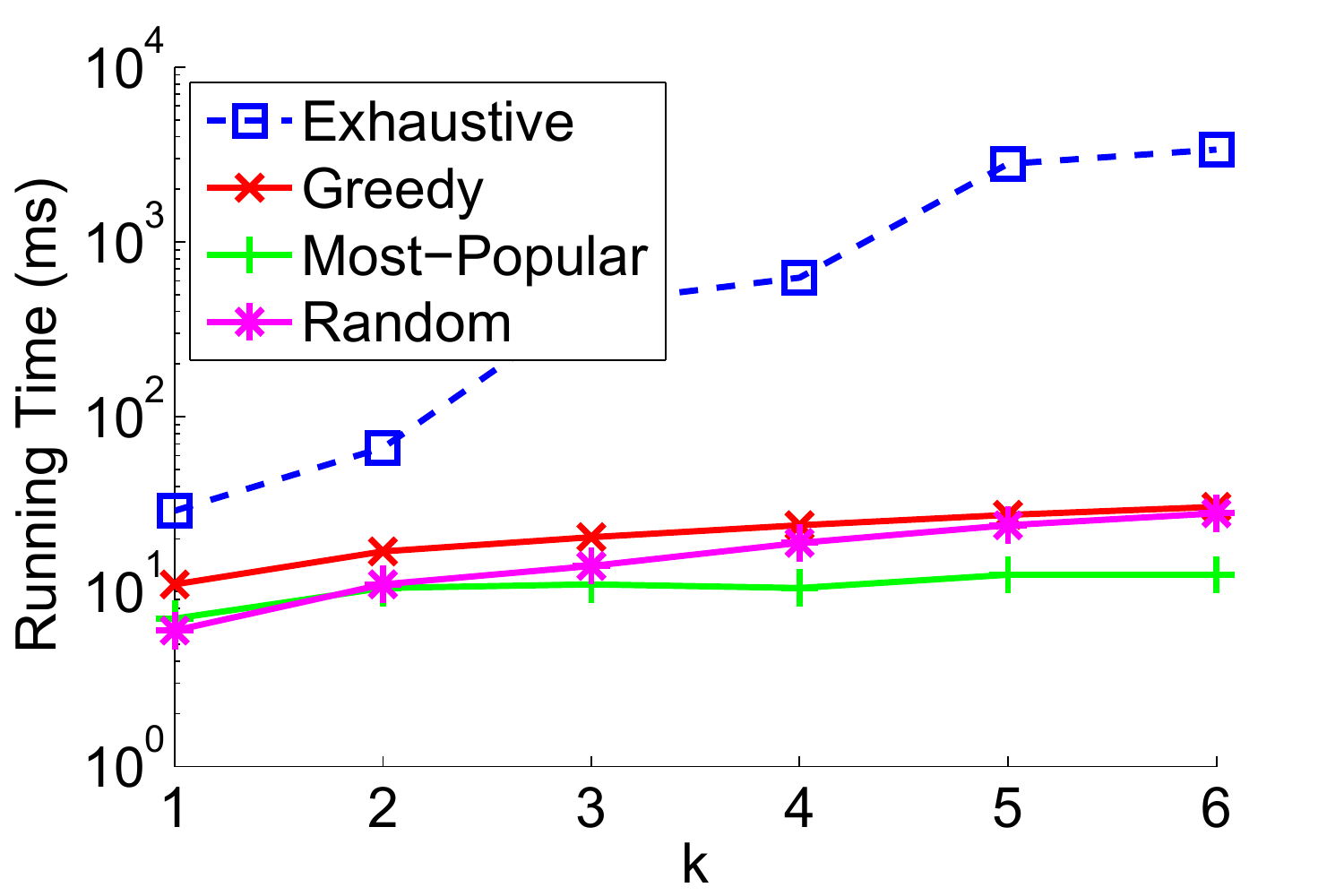}
\end{tabular}
\vspace{-4mm}\caption{Running time comparison for various algorithms on Flixster with $l$ = 3 (left) and Twitter with $l$ = 5 (right). 
\label{fig:running_time_k}} \vspace{-2mm}
\end{figure}







\section{Conclusions and Future Work} 
\label{sec:concl} 

Ever since Domingos and Richardson~\cite{domingos01} introduced the notion of
network value of users in social networks, a lot of work has been done
to identify influencers, community leaders, trendsetters etc. Work in 
this area has been further ignited since Kempe et al. \cite{kempe03} 
popularized influence maximization as a discrete optimization problem.  
Most of
the current approaches largely work as a black box that just outputs a list of
influencers or seeds, along with a scalar which is an estimate of
the expected influence spread. A marketer would 
want to investigate the influence demographics of the seeds returned to her 
and validate them with her own independent survery and/or background knowledge. 
Motivated by this, our goal has been to open up the above 
black box 
and provide informative and crisp
explanations for the influence distribution of influencers, thus allowing 
the marketer to drill down into a seed and address deeper analytic questions 
about what the seed is good for. 

We formalized the above problem as that of finding up to $k$ explanations, each 
containing $l$ or more features, while maximizing the coverage. We showed the problem is not only \NPhard to solve optimally, but is 
\NPhard to approximate within any reasonable factor. Yet, exploiting the nice properties of 
the objective function, we developed a simple greedy algorithm. Our experiments on 
Flixster and Twitter datasets show the validity and usefulness of the 
explanations generated by our framework. Furthermore, they show that the greedy 
algorithm significantly outperforms several natural baselines. 
One of these is an exhaustive approximation algorithm
that by repeatedly finding the explanation with the greatest
marginal coverage gain achieves a (1-1/e)-approximation of the
optimal coverage. However, our
greedy algorithm achieves a coverage very close to that of the
exhaustive approximation algorithm and is an order
of magnitude or more faster. It is interesting to investigate how algorithms for mining maximum frequency item sets of a given cardinality 
(e.g., see \cite{freqitemsetmining}) can be leveraged for finding explanations with the maximum marginal gain. 

Several interesting problems remain open. We give one example. Advertisers 
often like to target users in terms of properties like demographics, instead of 
targeting specific users. E.g., 
if we want to target female college grads in California, what would be an 
effective set of explanations that would describe the influence distribution of this 
demographic? How can we generate these explanations efficiently and validate them? 
\eat{On a more technical level, in this paper, we assumed our input is a social network 
along with an action log. Suppose we are instead given a model consisting of a social network along 
with pairwise user influence probabilities as input. In this case, how should we define 
explanations for {\sl expected} influence spread and how do we generate them efficiently 
and how do we validate them?  } 

\eat{ We have presented an efficient method of assembling an explanation
	of a user's or set of user's influence by identifying feature space
	regions of strong influence. Explaning influence adds value to seed
	set recommendation by giving a means to inspect the results of a
	recommender that otherwise functions as a blackbox. We find that
	proper feature design is critical to achieve balanced use of
	features in explanations, and present techniques to bin features to
	achieve this.  }

\eat{ 
\note{Laks: Need to write from scratch. Main points. Why explain? Why
looking at n/w value as just a scalar is not enough. Main contributions
recalled: f/w, analytical results, algos, experiments, lesson learned,
future work.} 
}


\vspace*{-4pt}
\scriptsize
\bibliographystyle{abbrv} 
\bibliography{singlebib}  

\begin{thebibliography}{10}

\bibitem{agarwal08}
N.~Agarwal, H.~Liu, L.~Tang, and P.~S. Yu.
\newblock Identifying the influential bloggers in a community.
\newblock In {\em WSDM}, 2008.

\bibitem{bakshy2011}
E.~Bakshy, J.~M. Hofman, W.~A. Mason, and D.~J. Watts.
\newblock Everyone's an influencer: quantifying influence on twitter.
\newblock In {\em WSDM}, 2011.

\bibitem{Barbieri2012}
N.~Barbieri, F.~Bonchi, and G.~Manco.
\newblock Topic-aware social influence propagation models.
\newblock In {\em ICDM}, 2012.

\bibitem{Barbieri2013}
N.~Barbieri, F.~Bonchi, and G.~Manco.
\newblock Cascade-based community detection.
\newblock In {\em WSDM}, 2013.

\bibitem{cha10}
M.~Cha~et al.
\newblock {Measuring User Influence in Twitter: The Million Follower Fallacy}.
\newblock In {\em ICWSM}, 2010.

\bibitem{domingos01}
P.~Domingos and M.~Richardson.
\newblock Mining the network value of customers.
\newblock In {\em KDD}, 2001.

\bibitem{feige98}
U.~Feige.
\newblock A threshold of $\ln n$ for approximating set cover.
\newblock {\em J. ACM}, 45(4):634--652, 1998.

\bibitem{Gomez-RodriguezLK10}
M.~Gomez-Rodriguez, J.~Leskovec, and A.~Krause.
\newblock Inferring networks of diffusion and influence.
\newblock In {\em KDD}, 2010.

\bibitem{leaders}
A.~Goyal, F.~Bonchi, and L.~V.~S. Lakshmanan.
\newblock Discovering leaders from community actions.
\newblock In {\em CIKM}, 2008.

\bibitem{goyal2012}
A.~Goyal, F.~Bonchi, and L.~V.~S. Lakshmanan.
\newblock A data-based approach to social influence maximization.
\newblock {\em PVLDB}, 5(1), 2011.

\bibitem{Gruhl2004}
D.~Gruhl, R.~Guha, D.~Liben-Nowell, and A.~Tomkins.
\newblock Information diffusion through blogspace.
\newblock In {\em WWW}, 2004.

\bibitem{freqitemsetmining}
J.~Han, J.~Wang, Y.~Lu, and P.~Tzvetkov.
\newblock Mining top-k frequent closed patterns without minimum support.
\newblock In {\em ICDM}, pages 211--218, 2002.

\bibitem{Jamali10}
M.~Jamali and M.~Ester.
\newblock A matrix factorization technique with trust propagation for
  recommendation in social networks.
\newblock In {\em RecSys}, 2010.

\bibitem{kempe03}
D.~Kempe, J.~M. Kleinberg, and {\'E}.~Tardos.
\newblock Maximizing the spread of influence through a social network.
\newblock In {\em KDD}, 2003.

\bibitem{Lacave2004}
C.~Lacave and F.~J. Diez.
\newblock A review of explanation methods for heuristic expert systems.
\newblock {\em Knowl. Eng. Rev.}, 19(2), 2004.

\bibitem{LeskovecKDD07}
J.~Leskovec~et al.
\newblock Cost-effective outbreak detection in networks.
\newblock In {\em KDD}, 2007.

\bibitem{Liu2010}
L.~Liu, J.~Tang, J.~Han, M.~Jiang, and S.~Yang.
\newblock Mining topic-level influence in heterogeneous networks.
\newblock In {\em CIKM}, 2010.

\bibitem{submodular}
G.~L. Nemhauser, L.~A. Wolsey, and M.~L. Fisher.
\newblock An analysis of approximations for maximizing submodular set
  functions.
\newblock {\em Mathematical Programming}, 14(1):265--294, 1978.

\bibitem{handbook}
F.~Ricci, L.~Rokach, B.~Shapira, and P.~B. Kantor, editors.
\newblock {\em Recommender Systems Handbook}.
\newblock Springer, 2011.

\bibitem{romero2011}
D.~M. Romero, W.~Galuba, S.~Asur, and B.~A. Huberman.
\newblock Influence and passivity in social media.
\newblock In {\em WWW}, 2011.

\bibitem{trendsetters}
D.~Saez-Trumper, G.~Comarela, V.~Almeida, R.~Baeza-Yates, and F.~Benevenuto.
\newblock Finding trendsetters in information networks.
\newblock In {\em KDD}, 2012.

\bibitem{shieh2012}
M.~Shieh, S.~Tsai, and M.~Yang.
\newblock On the inapproximability of maximum intersection problems.
\newblock {\em Information Processing Letters}, 2012.

\bibitem{TangSWY09}
J.~Tang, J.~Sun, C.~Wang, and Z.~Yang.
\newblock Social influence analysis in large-scale networks.
\newblock In {\em KDD}, 2009.

\bibitem{twitterrank}
J.~Weng, E.-P. Lim, J.~Jiang, and Q.~He.
\newblock Twitterrank: finding topic-sensitive influential twitterers.
\newblock In {\em WSDM}, 2010.

\bibitem{xavier2012}
E.~Xavier.
\newblock A note on a maximum k-subset intersection problem.
\newblock {\em Information Processing Letters}, 2012.

\end{thebibliography}
\end{document}